\newtheorem{theorem}{Theorem}
\newtheorem{definition}{Definition}
\newtheorem{example}{Example}
\newenvironment{proof}[1][Proof]{\noindent\textbf{#1.} }{\ \rule{0.5em}{0.5em}}
\begin{document}

\title{\textbf{Codes over Hurwitz integers}}
\author{Murat G\"{u}zeltepe  \\
{\small Department of Mathematics, Sakarya University, TR54187 Sakarya,
Turkey}}
\date{}
\maketitle

\begin{abstract}
In this study, we obtain new classes of linear codes over Hurwitz
integers equipped with a new metric. We refer to the metric as
Hurwitz metric. The codes with respect to Hurwitz metric use in
coded modulation schemes based on quadrature amplitude modulation
(QAM)-type constellations, for which neither Hamming metric nor
Lee metric. Also, we define decoding algorithms for these codes
when up to two coordinates of a transmitted code vector are
effected by error of arbitrary Hurwitz weight.
\end{abstract}


\bigskip \textsl{AMS Classification:}{\small \ 94B05, 94B15, 94B35, 94B60}

\textsl{Keywords:\ }{\small Block codes, Cyclic codes, Syndrome
decoding}

\section{Introduction }
Hamming and Lee distances have been revealed to be inappropriate
metrics to deal with quadrature amplitude modulation (QAM) signal
sets and other related constellations. To solve this problem,
different authors have constructed new error-correcting codes over
fields or rings. For example, Huber discovered a new way to
construct codes for two-dimensional signals in terms of Gaussian
integers, i.e., the integral points on the complex plane
\cite{Huber}. His original idea is to regard a finite field as a
residue field of the Gaussian integer ring modulo a Gaussian prime
and, by Euclidean division, to get a unique element of minimal
norm in each residue class, which represents each element of
finite field. Therefore, each element of finite field can be
represented by a Gaussian integer with the minimal Galois norm in
the residue class; and the set of the selected Gaussian integers
is called a constellation. Since the Galois norm of integral
points on the complex plane coincides with the Euclidean metric,
Huber's constellation is of minimal energy. Moreover, Huber
introduced the Mannheim weight by means of the Manhattan metric of
the constellation, and obtained linear codes which are of one
Mannheim error-correcting capability. In \cite{Huber2}, Huber
developed his wonderful idea further to the Eisenstein integers,
i.e., the algebraic integers of the cyclotomic field generated by
the sixth roots of unity. Although Huber's work constitutes a
relevant contribution, unfortunately the Mannheim distance is not
a true metric as was proved in\cite{Carmen2}. Later, T. P. da
Nobrega Neto \emph{et al}. in \cite{Neto} discussed the algebraic
integer rings of quadratic fields which are Euclidean norm, and
proposed a new class of linear codes. In \cite{Neto}, codes over
the ring $\mathcal{Z}[i]$ of Gaussian integers and codes over the
ring $A_p[\rho]$ of Eisenstein-Jacobi integers were presented. The
metric used in \cite{Neto} is inspired by Mannheim metric.

On the other hand, C. Martinez \emph{et al}. introduced a metric
called Lipschitz metric in \cite{Carmen2}  and obtained codes over
Lipschitz integers with respect to this metric.

In this paper, we introduce Hurwitz metric over Hurwitz integers
and give codes over Hurwitz integers with respect to this metric.
Also, we give decoding algorithms of these codes.

In what follows, we consider the following:

\begin{definition} \cite{Ramo} The Hamilton Quaternion Algebra over the set of the real numbers
($R $), denoted by $H(R)$, is the associative unital algebra given
by the following representation:

i)$H(\mathcal{R})$ is the free $\mathcal{R}$ module over the
symbols $1,\widehat{e}_1,\widehat{e}_2,\widehat{e}_3$, that is, $
H(\mathcal{R}) = \{ a_0  + a_1 \widehat{e}_1 +$ $ a_2
\widehat{e}_2 + a_3 \widehat{e}_3:\;a_0 ,a_1 ,a_2 ,a_3  \in
\mathcal{R}\}$;

ii)1 is the multiplicative unit;

iii) $ \widehat{e}_1^2  = \widehat{e}_2^2  = \widehat{e}_3^2  =  -
1$;

iv) $ \widehat{e}_1\widehat{e}_2 =  - \widehat{e}_2\widehat{e}_1 =
\widehat{e}_3,\;\widehat{e}_3\widehat{e}_1 =  -
\widehat{e}_1\widehat{e}_3 =
\widehat{e}_2,\;\widehat{e}_2\widehat{e}_3 = -
\widehat{e}_3\widehat{e}_2 = \widehat{e}_1$.
\end{definition}

The set of Lipschitz integers $H(\mathcal{Z})$, which is defined
by $H(\mathcal{Z}) = \left\{ {{a_0} + {a_1}{\widehat{e}_1} + }
\right.$ $\left. {{a_2}{\widehat{e}_2} +
{a_3}{\widehat{e}_3}:{a_0},{a_1},{a_2},{a_3} \in \mathcal{Z}}
\right\}$, is a subset of $H(\mathcal{R})$, where $\mathcal{Z}$ is
the set of all integers. If $ q = a_0  + a_1 \widehat{e}_1 + a_2
\widehat{e}_2 + a_3 \widehat{e}_3$ is a quaternion integer, its
conjugate quaternion is $ q^* = a_0 - (a_1 \widehat{e}_1 + a_2
\widehat{e}_2 + a_3 \widehat{e}_3)$. The norm of  $q$ is $ N(q) =
q q^* = a_0^2  + a_1^2 + a_2^2 + a_3^2$. The units of
${H(\mathcal{Z})}$ are $ \pm 1,\pm \widehat{e}_1, \pm
\widehat{e}_2,\pm \widehat{e}_3$.

\begin{definition}\cite{Carmen2} Let $ \pi$ be an odd integer quaternion. If
there exists $\delta \in{H(\mathcal{Z})}$ such that
$q_1-q_2=\delta \pi$ then $q_1,q_2 \in {H(\mathcal{Z})}$ are right
congruent modulo $\pi$ and it is denoted as $q_1 \equiv_rq_2$.

\end{definition}

This equivalence relation is well-defined. Hence, it can be
considered as the quotient ring of the quaternion integers modulo
this equivalence relation, which is denoted by
$$H(\mathcal{Z})_\pi=\left\{ {\left. {q\;(\bmod \pi )} \right|\;q
\in H(\mathcal{Z})} \right\} .$$ This set coincides with the
quotient ring of the integer quaternions over the left ideal
generated by $\pi$, which is denoted by $\left\langle \pi
\right\rangle $ \cite{Carmen2}.

\begin{definition}\cite{Carmen2} Let $ \pi \ne 0 $ be a quaternion integer. Given $\alpha, \beta \in
H(\mathcal{Z})_\pi$, Lipschitz distance between $\alpha$ and
$\beta$ is computed as $\left| {{a_0}} \right| + \left| {{a_1}}
\right| + \left| {{a_2}} \right| + \left| {{a_3}}
 \right|$ and is denoted by $d_\pi(\alpha,\beta)$, where $$\alpha  - \beta { \equiv _r}{a_0} + {a_1}\widehat{e}_1 + {a_2}\widehat{e}_2 +
 {a_3}\widehat{e}_3 \ (mod\ \pi)$$
 with $\left| {{a_0}} \right| + \left| {{a_1}}
\right| + \left| {{a_2}} \right| + \left| {{a_3}}
 \right|$ minimum.
\end{definition} Lipschitz weight of the element $\gamma$ is defined as  $\left| {{a_0}} \right| + \left| {{a_1}}
\right| + \left| {{a_2}} \right| + \left| {{a_3}}
 \right|$ and is denoted by $w_{L}(\gamma)$, where
 $\gamma=\alpha-\beta$ with $\left| {{a_0}} \right| + \left| {{a_1}}
\right| + \left| {{a_2}} \right| + \left| {{a_3}}
 \right|$ minimum.

More information which are related with the arithmetic properties
of $H(\mathcal{Z})$ can be found in \cite{Carmen, Carmen2, Ramo}.

\begin{theorem} \cite{Carmen2} Let $\pi \in H(\mathcal{Z})$. Then $H(\mathcal{Z})_\pi$ has
$N(\pi)^2$ elements.
\end{theorem}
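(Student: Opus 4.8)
The plan is to turn the counting problem into a lattice-index computation. Identify $H(\mathcal{Z})$ with the standard integer lattice $\mathbb{Z}^{4}$ sitting inside the Euclidean space $H(\mathcal{R})\cong\mathbb{R}^{4}$, the coordinates being the components $a_0,a_1,a_2,a_3$. Assuming $\pi\neq 0$ (if $\pi=0$ the quotient is infinite, so this case is excluded), the left ideal $\langle\pi\rangle$ is exactly the set $H(\mathcal{Z})\pi=\{\delta\pi:\delta\in H(\mathcal{Z})\}$, which in turn is the image of the lattice $\mathbb{Z}^{4}$ under the $\mathbb{R}$-linear map $R_\pi\colon x\mapsto x\pi$ of $\mathbb{R}^{4}$. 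Hence $|H(\mathcal{Z})_\pi|=[H(\mathcal{Z}):\langle\pi\rangle]=[\mathbb{Z}^{4}:R_\pi(\mathbb{Z}^{4})]$, and it remains to evaluate this index.

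Next I would invoke the standard fact (Smith normal form / elementary divisors) that for an invertible real matrix $M$ one has $[\mathbb{Z}^{n}:M\mathbb{Z}^{n}]=|\det M|$. Since $\pi\neq 0$ admits the two-sided inverse $\pi^{*}/N(\pi)$ in $H(\mathcal{R})$, the map $R_\pi$ is invertible, so I only need $|\det R_\pi|$.

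To compute $|\det R_\pi|$, the key observation is that $R_\pi$ is a similarity of the Euclidean space: multiplicativity of the norm gives $N(x\pi)=N(x)N(\pi)$ for all $x$, so $R_\pi$ scales the positive-definite norm quadratic form by the constant $N(\pi)$; equivalently $R_\pi^{T}R_\pi=N(\pi)\,I$, whence $(\det R_\pi)^{2}=N(\pi)^{4}$ and $|\det R_\pi|=N(\pi)^{2}$ in dimension $4$. (Alternatively, one may write down the $4\times 4$ matrix of right multiplication by $\pi=a_0+a_1\widehat{e}_1+a_2\widehat{e}_2+a_3\widehat{e}_3$ in the basis $1,\widehat{e}_1,\widehat{e}_2,\widehat{e}_3$ and check directly that its determinant equals $(a_0^{2}+a_1^{2}+a_2^{2}+a_3^{2})^{2}=N(\pi)^{2}$.) Combining the two steps gives $|H(\mathcal{Z})_\pi|=N(\pi)^{2}$, and as a sanity check units give $N(\pi)=1$ and a trivial quotient.

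The only genuine subtlety — and where I would be most careful — is the bookkeeping: verifying that $\langle\pi\rangle$ really is the full-rank sublattice $R_\pi(\mathbb{Z}^{4})$ (closure under left multiplication by $H(\mathcal{Z})$ is immediate since $r(\delta\pi)=(r\delta)\pi$, and full rank follows from invertibility of $R_\pi$), and keeping left- versus right-multiplication conventions straight so that the ideal of Definition~2 (differences of the form $\delta\pi$) is matched to the map whose determinant is being computed. The similarity argument avoids the messier explicit determinant, so I would make that the main line of the proof and relegate the matrix computation to a remark.
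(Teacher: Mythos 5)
The paper does not prove this statement at all: it is quoted from the reference \cite{Carmen2} and used as a black box (only Theorem~2, the Hurwitz analogue, is proved here, taking this count as given). So there is no internal proof to compare against; judged on its own, your argument is correct and complete. The identification of the congruence classes with cosets of the left ideal $H(\mathcal{Z})\pi=R_\pi(\mathbb{Z}^4)$ matches the paper's own remark that $H(\mathcal{Z})_\pi$ is the quotient by the left ideal generated by $\pi$; the index formula $[\mathbb{Z}^4:M\mathbb{Z}^4]=|\det M|$ applies because $R_\pi$ is an integer matrix (the structure constants of the basis $1,\widehat{e}_1,\widehat{e}_2,\widehat{e}_3$ are $\pm1$) with nonzero determinant; and the similitude identity $R_\pi^{T}R_\pi=N(\pi)I_4$, coming from $N(x\pi)=N(x)N(\pi)$ and polarization, gives $|\det R_\pi|=N(\pi)^2$ cleanly. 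Your exclusion of $\pi=0$ is the right reading of the statement (the paper's surrounding definitions implicitly assume $\pi\neq0$, indeed $\pi$ odd or prime), and your care about left versus right multiplication is exactly the bookkeeping needed to match Definition~2. The one cosmetic remark is that calling the map $R_\pi$ invites confusion with the paper's ring $\mathcal{R}$ and field $\mathcal{R}_\pi$ introduced later; a different symbol would avoid a clash if this were spliced into the text.
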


\begin{definition}\cite{Con} The set of all Hurwitz integers is
$$\begin{array}{c}
 \mathcal {H} = \left\{ {{a_0} + {a_1}{\widehat{e}_1} + {a_2}{\widehat{e}_2} + {a_3}{\widehat{e}_3} \in H( R):{a_0},{a_1},{a_2},{a_3} \in \mathcal{Z}\;{\rm{or}}\;{a_0},{a_1},{a_2},{a_3} \in \mathcal{Z} + \frac{1}{2}} \right\} \\
  = H\left( \mathcal{Z} \right) \cup H\left( {\mathcal{Z} + \frac{1}{2}} \right). \\
 \end{array}$$

\end{definition}
It can be checked that $\mathcal{H}$ is closed under quaternion
multiplication and addition, so that it forms a subring of the
ring of all quaternions. \

\begin{definition} We define the set $\mathcal{R}$ as  $$\mathcal{R} =\left\{ {a + bw:\ a,b \in
\mathcal{Z}} \right\}.$$ Here and thereafter, $w$ will denote
$\frac{1}{2}(1 + \widehat{e}_1 + \widehat{e}_2 + \widehat{e}_3)$.
Let $ \pi $ be a prime in $\mathcal{R}$. If there exists $\delta
\in{\mathcal{R}}$ such that $q_1-q_2=\delta \pi$ then $q_1,q_2 \in
{\mathcal{R}}$ are congruent modulo $\pi$. We will denote it as
$q_1 \equiv q_2 \ (\bmod \ \pi)$.

\end{definition}

This equivalence relation is well-defined. We can consider the
subring of the Hurwitz integers modulo this equivalence relation,
which we denote as
$$\mathcal{R}_\pi=\left\{ {\left. {q\;(\bmod \pi )} \right|\;q
\in \mathcal{R}} \right\}.$$ It is obvious that $\mathcal{R}_\pi$
is a finite field with cardinal number $N(\pi)$.

For example, let
$\pi=1+2\widehat{e}_1+2\widehat{e}_2+2\widehat{e}_3=-1+4w$, then
$${R_\pi } = \left\{
\begin{array}{l}
 0,1, - 1 - w, - w,1 - w,2 - w, - 1 + 2w, \\
 1 - 2w, - 2 + w, - 1 + w,w,1 + w, - 1 \\
 \end{array} \right\}.$$

\begin{definition} Let $ \pi$ be a prime in $H(\mathcal{Z})$. If
there exists $\delta \in H(\mathcal{Z})$ such that $q_1-q_2=\delta
\pi$ then $q_1,q_2 \in \mathcal{H}$ are right congruent modulo
$\pi$ and it is denoted as $q_1 \equiv_rq_2$.

\end{definition}

We will use right congruent modulo $\pi$ in the present paper
unless told otherwise. Analogous results hold for left congruent
modulo $\pi$.

\begin{theorem} Let $\alpha$ be a prime integer quaternion. Then $\mathcal{H}_\alpha$ has
$2N(\alpha)^2-1$ elements.
\end{theorem}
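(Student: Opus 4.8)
The plan is to count the right-congruence classes of $\mathcal{H}$ modulo $\alpha$ by splitting the Hurwitz integers along the decomposition $\mathcal{H}=H(\mathcal{Z})\cup H(\mathcal{Z}+\tfrac12)$ and handling the integral (Lipschitz) part and the half-integral part separately. We are reducing modulo $H(\mathcal{Z})\alpha=\{\delta\alpha:\delta\in H(\mathcal{Z})\}$, which is contained in $H(\mathcal{Z})$ and is a subgroup of $(\mathcal{H},+)$ (it is closed under subtraction since $\delta_1\alpha-\delta_2\alpha=(\delta_1-\delta_2)\alpha$); hence $\equiv_r$ partitions $\mathcal{H}$ into cosets, each of $H(\mathcal{Z})$ and $H(\mathcal{Z}+\tfrac12)=w+H(\mathcal{Z})$ is a union of such cosets, and $|\mathcal{H}_\alpha|$ is the number of cosets meeting $H(\mathcal{Z})$ plus the number meeting $w+H(\mathcal{Z})$.

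First I would dispose of the integral part. Two Lipschitz integers differ by an element of $H(\mathcal{Z})\alpha$ inside $\mathcal{H}$ precisely when they are right congruent modulo $\alpha$ in the sense internal to $H(\mathcal{Z})$, so the theorem of Martinez \emph{et al.}\ recalled above ($H(\mathcal{Z})_\pi$ has $N(\pi)^2$ elements) applies without change and the Lipschitz integers contribute exactly $N(\alpha)^2$ residues. Moreover, since $H(\mathcal{Z})\alpha\subseteq H(\mathcal{Z})$, the difference of an integral and a half-integral quaternion is never of the form $\delta\alpha$ with $\delta$ Lipschitz; thus every coset lies entirely in $H(\mathcal{Z})$ or entirely in $w+H(\mathcal{Z})$, the two residue families are disjoint, and it remains only to count the cosets contained in $w+H(\mathcal{Z})$.

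The substance of the statement is that $w+H(\mathcal{Z})$ breaks up into exactly $N(\alpha)^2-1$ cosets. To pin this down I would fix a convenient form for the prime quaternion $\alpha$, write out an explicit transversal $D\subseteq\mathcal{H}$ for $H(\mathcal{Z})\alpha$ consisting of representatives of minimal norm (those used to define the Hurwitz weight), and reduce the half-integral Hurwitz integers $w+\lambda$, $\lambda\in H(\mathcal{Z})$, into $D$ by a division-with-remainder argument with respect to $\alpha$. The count then amounts to showing that $D$ contains exactly $N(\alpha)^2-1$ half-integral points, one fewer than one might first expect; the identity $2w=1+\widehat{e}_1+\widehat{e}_2+\widehat{e}_3\in H(\mathcal{Z})$, together with the parity of $N(\alpha)$, is the tool I would use to single out the unique missing class. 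Combining the two counts then yields $|\mathcal{H}_\alpha|=N(\alpha)^2+\bigl(N(\alpha)^2-1\bigr)=2N(\alpha)^2-1$.

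The easy ingredients are the splitting $\mathcal{H}=H(\mathcal{Z})\cup(w+H(\mathcal{Z}))$, the disjointness of the two residue families, and the appeal to the theorem on $H(\mathcal{Z})_\pi$. The hard part — the step I expect to be the main obstacle — is the exact count of the cosets in $w+H(\mathcal{Z})$: one has to isolate the lone exceptional class and prove that no further coincidences occur. I would make this rigorous by a short case analysis according to a normal form for $\alpha$ (in particular separating the norm-$2$ prime from the odd-norm primes) together with an explicit description of the minimal-norm representative in each residue class.
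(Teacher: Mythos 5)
Your reduction to counting cosets of $H(\mathcal{Z})\alpha$ separately in the two parts $H(\mathcal{Z})$ and $w+H(\mathcal{Z})$ is sound, and so is the count $N(\alpha)^2$ for the integral part. The genuine gap is exactly the step you flag as the main obstacle: the claim that $w+H(\mathcal{Z})$ splits into $N(\alpha)^2-1$ classes. No division-with-remainder or normal-form analysis can deliver this, because $\equiv_r$ is nothing but the coset partition of $\mathcal{H}$ by the additive subgroup $H(\mathcal{Z})\alpha$, and translation by $w$ is a bijection of $\mathcal{H}$ that carries cosets to cosets and carries $H(\mathcal{Z})$ onto $w+H(\mathcal{Z})$. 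Hence the half-integral part decomposes into exactly as many classes as the integral part, namely $N(\alpha)^2$; there is no ``lone exceptional class'' to isolate, and your outline, if completed along these lines, yields $N(\alpha)^2+N(\alpha)^2=2N(\alpha)^2$ rather than the asserted $2N(\alpha)^2-1$.

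This also puts your proposal in direct conflict with the paper's own argument, which is worth seeing clearly. The paper obtains the $-1$ by asserting that $H(\mathcal{Z})_\pi$ and $H(\mathcal{Z}+\frac{1}{2})_\pi$ each have $N(\pi)^2$ elements, that their nonzero residues are disjoint (via the parity of the coefficient $a_4$ of $w$), and that the additive identity belongs to both families, so that the overlap is exactly the zero class. Your observation that a coset can never meet both $H(\mathcal{Z})$ and $w+H(\mathcal{Z})$ --- which follows from $H(\mathcal{Z})\alpha\subseteq H(\mathcal{Z})$ --- denies precisely this shared zero class, and you then attempt to recover the missing $-1$ inside the half-integral count, which the translation argument above rules out. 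So the proposal cannot be completed as written: you would have to either exhibit a half-integral quaternion of the form $\delta\alpha$ with $\delta\in H(\mathcal{Z})$ (excluded by your own disjointness reasoning), or concede that your method produces the count $2N(\alpha)^2$ instead of the statement's $2N(\alpha)^2-1$.
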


\begin{proof} Let $\pi$ be a prime integer quaternion. According to Theorem 1, the cardinal number of
$ H(\mathcal{Z})_\pi$ is equal to $N(\pi)^2$. Also, the cardinal
number of $ H(\mathcal{Z}+\frac{1}{2})_\pi$ is equal to
$N(\pi)^2$. $( H(\mathcal{Z})_\pi -\left\{ {0} \right\}) \cap
(H(\mathcal{Z}+\frac{1}{2})_\pi-\left\{ {0} \right\})=\emptyset$
since the elements of the set  $
H(\mathcal{Z}+\frac{1}{2})_\pi-\left\{ {0} \right\}$ are defined
in the form $q-\delta \pi =
a_0+a_1\widehat{e}_1+a_2\widehat{e}_2+a_3\widehat{e}_3+a_4w$,
where $q\in H(\mathcal{Z}+\frac{1}{2}),$ $\delta,\;\pi \in
H(\mathcal{Z})$, $\ a_0,a_1,a_2,a_3\in \mathcal{Z}$ and $a_4$ is
an odd integer. But the additive identity is an element of both
sets $ H(\mathcal{Z})_\pi$ and $ H(\mathcal{Z}+\frac{1}{2})_\pi$.
Hence the proof is completed.

\end{proof}

Note that if $\delta$ is chosen from $\mathcal{H}$ instead of
$H(\mathcal{Z})$ then, Theorem 2 does not hold.

\ In the following definition, we introduce Hurwitz metric.

\begin{definition}Let $ \pi$ be a prime quaternion integer. Given $\alpha={a_0} + {a_1}\widehat{e}_1 + {a_2}\widehat{e}_2 +
 {a_3}\widehat{e}_3+a_4w , \beta={b_0} + {b_1}\widehat{e}_1 + {b_2}\widehat{e}_2 +
 {b_3}\widehat{e}_3+b_4w \in
\mathcal{H}_\pi$, then the distance between $\alpha$ and $\beta$
is computed as $\left| {{c_0}} \right| + \left| {{c_1}} \right| +
\left| {{c_2}} \right| + \left| {{c_3}}\right|+ \left| {{c_4}}
\right|$ and denoted by $d_H(\alpha,\beta)$, where $$\gamma
=\alpha - \beta { \equiv _r}{c_0} + {c_1}\widehat{e}_1 +
{c_2}\widehat{e}_2 +
 {c_3}\widehat{e}_3+c_4w \ (mod\ \pi)$$
 with $\left| {{c_0}} \right| + \left| {{c_1}}
\right| + \left| {{c_2}} \right| + \left| {{c_3}}
 \right|+\left| {{c_4}} \right|$ minimum.
\end{definition} Also, we define Hurwitz weight of $\gamma =
\alpha -\beta$ as $$w_H(\gamma)=d_H(\alpha,\beta).$$
 It is possible to show that $d_H(\alpha, \beta)$ is a
 metric. We only show that the triangle inequality holds since the other conditions are straightforward. For this, let $\alpha$,
 $\beta$, and $\gamma$ be any three elements of $\mathcal{H}_\pi$. We
 have

 i) $d_H(\alpha, \beta)=w_H(\delta_1)=\left| {{a_0}} \right| + \left| {{a_1}}
\right| + \left| {{a_2}} \right| + \left| {{a_3}}
 \right|+ \left| {{a_4}} \right|  $, where $\delta_1 \equiv \alpha - \beta ={a_0} + {a_1}\widehat{e}_1 + {a_2}\widehat{e}_2 +
 {a_3}\widehat{e}_3+ {a_4}w \ (mod\  \pi )$ is an element of $\mathcal{H}_\pi$, and
 $\left| {{a_0}} \right| + \left| {{a_1}}
\right| + \left| {{a_2}} \right| + \left| {{a_3}}
 \right|+ \left| {{a_4}} \right|  $ is minimum.\

\

ii) $d_H(\alpha, \gamma)=w_H(\delta_2)=\left| {{b_0}} \right| +
\left| {{b_1}} \right| + \left| {{b_2}} \right| + \left| {{b_3}}
\right|+ \left| {{b_4}} \right| $, where $\delta_2 \equiv \alpha -
\gamma ={b_0} + {b_1}\widehat{e}_1 + {b_2}\widehat{e}_2 +
 {b_3}\widehat{e}_3 + {b_4}w \ (mod\  \pi )$ is an element of $\mathcal{H}_\pi$, and
 $\left| {{b_0}} \right| + \left| {{b_1}}
\right| + \left| {{b_2}} \right| + \left| {{b_3}}
 \right|+ \left| {{b_4}} \right|  $ is minimum.\

\

 iii) $d_H(\gamma, \beta)=w_H(\delta_3)=\left| {{c_0}} \right| +
\left| {{c_1}} \right| + \left| {{c_2}} \right| + \left| {{c_3}}
 \right| + \left| {{c_4}} \right|$, where $\delta_3 \equiv \gamma - \beta ={c_0} + {c_1}\widehat{e}_1 + {c_2}\widehat{e}_2 +
 {c_3}\widehat{e}_3+{c_4}w \ (mod\  \pi )$ is an element of $\mathcal{H}_\pi$, and
 $\left| {{c_0}} \right| +
\left| {{c_1}} \right| + \left| {{c_2}} \right| + \left| {{c_3}}
 \right| + \left| {{c_4}} \right|$ is minimum.\

 Thus, $\alpha-\beta=\delta_2+\delta_3 \ (\bmod \ \pi)$. However, $w_H\left( {{\delta _2} + {\delta _3}} \right) \ge w_H\left( {{\delta _1}}
 \right)$ since $w_H(\delta_1)=\left| {{a_0}} \right| + \left| {{a_1}}
\right| + \left| {{a_2}} \right| + \left| {{a_3}}
 \right|+ \left| {{a_4}}
 \right|$ is minimum. Therefore, $$d_H(\alpha, \beta) \le d_H(\alpha, \gamma)+d_H(\gamma, \beta).
 $$

Note that Hurwitz metric is not Lipschitz metric. To see this,
Lipschitz weight of the element
$w=\frac{1}{2}+\frac{1}{2}\widehat{e}_1+\frac{1}{2}\widehat{e}_2+\frac{1}{2}\widehat{e}_3$
is $w_L(w)=2$ and Hurwitz weight of the same element is
$w_H(w)=1$. \ \

The rest of this paper is organized as follows. In Section 2, one
error, double error and errors of arbitrary Hurwitz weight
correcting codes over $\mathcal{R}_\pi$ are defined. Also,
decoding algorithms of these codes are given. In Section 3, one
error, double error and errors of arbitrary Hurwitz weight
correcting codes over $\mathcal{H}_\pi$ are defined. Also,
decoding algorithms of these codes are given.

\section{Codes over $\mathcal{R}_\pi$}

Let $\pi$ be a prime in $\mathcal{R}$ and let $\beta$ be an
element of $\mathcal{R}_\pi$ such that $ \beta ^{(p-1)/6}  = \pm
w$. Recall that the cardinal number of $\mathcal{R}_\pi$ is equal
to $N(\pi)$. Thereafter, the length $n$ is taken as $n=(p-1)/6$,
where $p=\pi\pi ^{*}\equiv 1 \ (mod \ 6)$ is a prime in
$\mathcal{Z}$.

\begin{theorem}
Let $C$ be the code defined by the parity check matrix
\begin{equation}H = \left( {\begin{array}{*{20}{c}}
   {{1}}, & {{\beta }}, &  \cdots , & {\beta ^{n-1}}  \\
\end{array}} \right).
\end{equation}

Then $C$ can correct error vectors of Hurwitz weight 1 and some of
error vectors of Hurwitz weight 2. Error vectors of Hurwitz weight
1 have just one nonzero component. The nonzero component of the
above stated error vectors can take on one of the four values
$\pm1,\ \pm w$. The error vectors of Hurwitz weight 2 which can be
corrected have just one nonzero component which can take one of
the two values $\pm w^2$.

In other words, the code $C$ can correct any error pattern of the
form $e(x)=e_ix^{i}$, where $w_H(e_i)=1$ and the error patterns
$e(x)=\pm w^2 x^{i}$, where $w_H(\pm w^2)=2$. Thus, $d_H(C)\ge 3$.

\end{theorem}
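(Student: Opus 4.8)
The plan is to argue exactly as in Huber-style single-symbol-error decoding: since $C$ has a single parity check $H=(1,\beta,\dots,\beta^{n-1})$ with entries in the finite field $\mathcal{R}_\pi$, the syndrome of a received word $r(x)=c(x)+e(x)$ is $s=Hr^{T}=He^{T}$. If the error is concentrated in one coordinate, say $e(x)=e_i x^{i}$, then $s=e_i\beta^{i}$. I would first observe that because $\beta^{(p-1)/6}=\pm w$ and $n=(p-1)/6$, the powers $\beta^{0},\beta^{1},\dots,\beta^{n-1}$ are pairwise distinct nonzero elements of $\mathcal{R}_\pi$ (their quotients are powers $\beta^{j}$ with $0<j<n$, none of which can equal $1$ since the multiplicative order of $\beta$ is exactly $6n=p-1$, as $\beta^{n}=\pm w$ has order $12/\gcd(\dots)$… more precisely $\beta$ is a primitive root). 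Hence the map $(i,u)\mapsto u\beta^{i}$ is injective on pairs where $i\in\{0,\dots,n-1\}$ and $u$ ranges over the weight-one units $\{\pm1,\pm w\}$ together with the special values $\{\pm w^{2}\}$ — provided these six field elements $\{\pm1,\pm w,\pm w^{2}\}$ are themselves distinct in $\mathcal{R}_\pi$. This last fact is where the hypothesis $\beta^{(p-1)/6}=\pm w$ does the real work: it forces $w$ to be a primitive $6$th root of unity in $\mathcal{R}_\pi$, so $1,w,w^{2},-1,-w,-w^{2}$ are the six distinct sixth roots of unity.

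Next I would set up the decoding/distance claim formally. Let $e(x)$ be any error of one of the two admissible types: (a) $e(x)=e_i x^{i}$ with $e_i\in\{\pm1,\pm w\}$ (Hurwitz weight $1$), or (b) $e(x)=\pm w^{2}x^{i}$ (Hurwitz weight $2$). In every case the syndrome equals $u\beta^{i}$ for a unique sixth root of unity $u$ and a unique exponent $i$. By the injectivity just established, distinct admissible error patterns produce distinct syndromes, and the zero syndrome occurs only for the zero error. Therefore the syndrome determines the error pattern uniquely, which is exactly correctability. The decoding rule is: compute $s$; if $s=0$ declare no error; otherwise write $s=u\beta^{i}$ with $u$ a sixth root of unity and $0\le i\le n-1$ (possible and unique by the above), and subtract $u x^{i}$ (read as $\pm w^{2}x^{i}$ when $u=\pm w^{2}$) from $r(x)$.

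For the minimum-distance statement $d_H(C)\ge 3$, I would argue by contradiction: a nonzero codeword $c(x)$ of Hurwitz weight $\le 2$ would be a nonzero error pattern of one of the correctable shapes — either weight one in a single coordinate, or weight two that is either $\pm w^{2}$ in one coordinate or two weight-one contributions spread over (at most) two coordinates — and any such pattern has nonzero syndrome, contradicting $Hc^{T}=0$. The only subtlety is the genuinely two-coordinate weight-two pattern $e(x)=u_1 x^{i}+u_2 x^{j}$ with $u_1,u_2\in\{\pm1,\pm w\}$ and $i\ne j$; for this I would note $Hc^{T}=u_1\beta^{i}+u_2\beta^{j}=0$ would give $\beta^{i-j}=-u_2 u_1^{-1}$, again a sixth root of unity equal to some power $\beta^{k}$ with $0<|k|<n$, contradicting that $\beta$ has order $p-1=6n$.

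The main obstacle I anticipate is the bookkeeping around the order of $\beta$ and of $w$ in $\mathcal{R}_\pi$: one must pin down precisely that $\beta^{(p-1)/6}=\pm w$ forces $\operatorname{ord}(\beta)=p-1$ and that $w$ has order exactly $6$, so that the six elements $\{\pm1,\pm w,\pm w^{2}\}$ are distinct and the $n$ powers $\beta^{0},\dots,\beta^{n-1}$ lie in distinct cosets of $\langle w\rangle$. Once that multiplicative-group structure is nailed down, the rest is the routine single-parity-check syndrome argument. Note also the implicit claim that $w^{2}$ really does have Hurwitz weight $2$, which should be checked directly from Definition 9 by reducing $w^{2}=\tfrac12(-1+\widehat e_1+\widehat e_2+\widehat e_3)=-1+w$ and confirming no shorter representative exists modulo $\pi$ — this is a small finite verification rather than a conceptual difficulty.
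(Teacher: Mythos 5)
Your proposal is correct and takes essentially the same route as the paper: the paper's own proof is just the terse version of your argument, computing the syndrome $S=Hr^{T}=\beta^{L}$ (the admissible error values $\pm 1,\pm w,\pm w^{2}$ being powers of $\beta^{n}=\pm w$), reading the error location as $l\equiv L\pmod{n}$ and the error value as $\beta^{L-l}$. One caveat: the fact that $\beta$ has order $p-1$, which your injectivity and $d_H(C)\ge 3$ arguments rely on, is really an implicit standing assumption and does not follow from $\beta^{(p-1)/6}=\pm w$ alone (e.g.\ $\beta$ could itself be a sixth root of unity); the paper leaves this, the distinctness of $\{\pm 1,\pm w,\pm w^{2}\}$, and the minimum-distance claim entirely unargued, so your write-up is if anything more complete than the original.
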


\begin{proof} Let $r(x) = c(x)+e(x)$ be the received polynomial, where $c(x)$ denotes the codeword polynomial and $e(x)$ denotes the error polynomial.
 The vector corresponding to the polynomial $r(x)$ is $r=c+e$. We first compute the syndrome $S$ of $r$:

$$S=Hr^T=\beta^L.$$
By reducing $L$ modulo $n$, we determine the location of the error
with the value of the error $\beta^{L-l}$, where $l\equiv L \bmod
n$. Hence we have the location and the value of the error.
\end{proof}

\begin{example} Let $\pi  = 1 + 2\widehat{e}_1 + 2\widehat{e}_2 + 2\widehat{e}_3$ and $\beta =\widehat{e}_1+\widehat{e}_2+\widehat{e}_3 $.
Let $C$ be the code defined by the parity check matrix

$$H = \left[ {\begin{array}{*{20}{c}}
   {1,} & \beta   \\
\end{array}} \right].$$ Suppose that the received vector is $r=(-\beta ,\
w)$. The syndrome $S$ of $r$ is
$$S=Hr^T=-\frac{1}{2}(3+\widehat{e}_1+\widehat{e}_2+\widehat{e}_3)\equiv  \beta^5 \ (\bmod \ \pi).$$ The location of the error is $1\equiv 5 \ (\bmod \ 2)$ with the
value $\frac{\beta^5}{\beta}=\beta^4\equiv w^2 \ (\bmod \ \pi)$.
Hence, the corrected vector is $c=r-(0,\ w^2)=(-\beta, \
w-w^2)\equiv (-\beta, \ 1) \ (\bmod \ \pi)$.

\end{example}

\begin{theorem} Let $C$ be the code defined by the parity check matrix

$$H = \left[ {\begin{array}{*{20}{c}}
   {1,} & {\beta ,} & {{\beta ^2},} &  \cdots ,  & {{\beta ^4}}  \\
   {1,} & {{\beta ^7},} & {{\beta ^{14}},} &  \cdots , & {{\beta ^{7(n - 1)}}}  \\
\end{array}} \right].$$
Then $C$ is capable of correcting any error pattern of the form
$e(x)=e_ixî$, where $1\le w_H(e_i)\le d_{max}$. Here,
$d_{max}=max\left\{ {{w_H}\left( q \right):q \in \mathcal{R}_\pi}
\right\}$.
\end{theorem}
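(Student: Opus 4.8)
The plan is to imitate the syndrome computation in the proof of Theorem~5, but to use \emph{both} rows of $H$ so that the location of a single-coordinate error of arbitrary value can still be read off. Write the received polynomial as $r(x)=c(x)+e(x)$ with $e(x)=e_ix^{i}$ and $e_i\in\mathcal{R}_\pi$; if $w_H(e_i)=0$ then $e_i=0$, the syndrome vanishes, and nothing has to be done, so assume $e_i\neq0$. Since $c(x)$ lies in $C$, each of the two rows of $H$ annihilates it, and hence
\[
S=Hr^{T}=(S_1,S_2)^{T}=\bigl(\,e_i\beta^{\,i},\;e_i\beta^{\,7i}\,\bigr)^{T}.
\]
Because $\mathcal{R}_\pi$ is a field of order $p=N(\pi)$ and $\beta$ is a unit, $S_1$ and $S_2$ are nonzero; here the second exponent $7=6+1$ is chosen precisely so that the quotient
\[
S_2S_1^{-1}=\beta^{\,6i}
\]
is a power of $\beta^{6}$.

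The heart of the argument is that $S_2S_1^{-1}$ already determines $i$. First I would record that $w^{2}=w-1$ (a one-line quaternion computation), so $w^{3}=-1$, $w^{6}=1$, and $w$ is a primitive sixth root of unity in $\mathcal{R}_\pi$; combined with $\beta^{(p-1)/6}=\pm w$ this forces $\beta$ to have multiplicative order $p-1=6n$, i.e.\ $\beta$ generates $\mathcal{R}_\pi\setminus\{0\}$. Consequently $\beta^{6}$ has order exactly $n$, so $\beta^{0},\beta^{6},\dots,\beta^{6(n-1)}$ are pairwise distinct; equivalently, $i\mapsto\beta^{6i}$ is injective on $\{0,1,\dots,n-1\}$. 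Hence from $S_2S_1^{-1}$ one recovers $6i\bmod(p-1)$, and since $0\le 6i\le 6(n-1)<p-1$ this pins down $i$ uniquely (it is the base-$\beta^{6}$ discrete logarithm of $S_2S_1^{-1}$). With $i$ in hand, the error value is recovered as $e_i=S_1\beta^{-i}$, and the decoder outputs $c(x)=r(x)-e_ix^{i}$. That this is the \emph{only} single-coordinate pattern consistent with $S$ follows from the same injectivity: if $e_ix^{i}$ and $e'_jx^{j}$ (both nonzero) produce the same $S$, dividing the two coordinates gives $\beta^{6i}=\beta^{6j}$, hence $i=j$, and then the first coordinate forces $e_i=e'_j$. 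Since $w_H(q)\le d_{max}$ holds for every $q\in\mathcal{R}_\pi$ by the definition of $d_{max}$, this establishes the claim.

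I expect the order computation for $\beta$ to be the only delicate point: the theorem is true exactly because $i\mapsto\beta^{6i}$ is injective on the coordinate set, and this rests on $\beta$ being a primitive element of the field $\mathcal{R}_\pi$. If one insists on the literal sign $\beta^{(p-1)/6}=-w$, then $\beta^{(p-1)/6}$ has order $3$ rather than $6$ and $\beta$ need not be primitive, so a short separate verification that $\beta^{6}$ still has order $n$ would be needed; I would either treat the two signs in parallel or, more cleanly, keep the primitivity of the constellation generator $\beta$ as a standing hypothesis, as in the analogous Gaussian- and Eisenstein-integer constructions.
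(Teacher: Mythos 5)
Your proposal is correct and follows essentially the same route as the paper: the paper's own (much terser) proof likewise forms the syndrome quotient $s_7/s_1=\beta^{6l}$ to locate the error and recovers its value as $s_1/\beta^{l}$, with your injectivity discussion simply supplying the justification the paper omits. One caveat: the condition $\beta^{(p-1)/6}=+w$ alone does not force $\beta$ to be primitive (e.g.\ $\beta=w^{-1}$ also satisfies it while $\beta^{6}=1$), so your fallback of taking primitivity of $\beta$ as a standing hypothesis is the right fix for both signs, not just for $-w$.
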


\begin{proof} Let $r = c+e$ be a received vector. First we compute the syndrome $S$ of $r$:

$$S = H{r^T} = \left( {\begin{array}{*{20}{c}}
   {{s_1} = {\beta ^{{L_1}}}}  \\
   {{s_7} = {\beta ^{7{L_1}}}}  \\
\end{array}} \right).$$

Let the error occurs in the location $l$, where
$\beta^{6l}=\frac{s_7}{s_1}$. By reducing $l\equiv L$ modulo $n$,
we determine the location of the error with the value of the error
$\frac{s_1}{\beta^{l}}$. Hence, we have the location and the value
of the error.

\end{proof}

\begin{example} Let $\pi  = 2 + 3\widehat{e}_1 + 3\widehat{e}_2 + 3\widehat{e}_3$ and $\beta =-\frac{1}{2}(5+\widehat{e}_1+\widehat{e}_2+\widehat{e}_3) $.
Let $C$ be the code defined by the parity check matrix

$$H = \left[ {\begin{array}{*{20}{c}}
   {1,} & {\beta ,} & {{\beta ^2},} &  \beta^3 , & {{\beta ^4}}  \\
   {1,} & {{\beta ^7},} & {{\beta ^{14}},} &  \beta^{21} , & {{\beta ^{28}}}  \\
\end{array}} \right].$$

 Suppose that the received vector is $r=(0 ,\
0,\ 0,\ 2,\ 0)$. The syndrome $S$ of $r$ is
$$S = H{r^T} = \left( {\begin{array}{*{20}{c}}
   {{s_1} = {\beta ^{{27}}}}  \\
   {{s_7} = {\beta ^{15}}}  \\
\end{array}} \right).$$
Then, $\beta^{6l}=\frac{s_7}{s_1}=\beta^{18}$ which implies that
$l=3 \ (\bmod \ n)$. Hence, the location of the error is $l=3$
with the value $\frac{s_1}{\beta^{l}}=\beta^{24}\equiv 2 \ (\bmod
\ \pi)$. Hence, the corrected vector is $c=r-(0 ,\ 0,\ 0,\ 2,\
0)=0$.

\end{example}

\begin{theorem} Let $C$ be the code defined by the parity check
matrix $$H = \left[ {\begin{array}{*{20}{c}}
   1, & \beta,  & {{\beta ^2}}, & {{\beta ^3}}, &  \cdots , & {{\beta ^{n - 1}}}  \\
   1, & {{\beta ^7}}, & {{\beta ^{14}}}, & {{\beta ^{21}}}, &  \cdots , & {{\beta ^{7(n - 1)}}}  \\
   1, & {{\beta ^{13}}}, & {{\beta ^{26}}}, & {{\beta ^{39}}}, &  \cdots , & {{\beta ^{13(n - 1)}}}  \\
\end{array}} \right].$$
Then $C$ can correct any error pattern of the form
$e(x)=e_ix^i+e_jx^j$, where $0\le w_H(e_i),\ w_H(e_j)\le 1$, and
$0\le i< j\le n-1$.

\end{theorem}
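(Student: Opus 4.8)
The plan is to follow the classical double-error syndrome-decoding scheme, using crucially that every admissible error value is a sixth root of unity in $\mathcal{R}_\pi$. Write the received word as $r=c+e$ with $e=e_ix^{i}+e_jx^{j}$ as in the statement, and compute the syndrome $S=Hr^{T}=(s_1,s_7,s_{13})^{T}$, where $s_m=e_i\beta^{mi}+e_j\beta^{mj}$ for $m=1,7,13$. If $S=0$ there is nothing to correct, so assume $S\neq 0$; if at most one of $e_i,e_j$ is nonzero, then $e$ has a single nonzero coordinate of Hurwitz weight $1$ and the first two rows of $H$ — that is, the pair $(s_1,s_7)$, which is precisely the parity-check matrix of the preceding theorem — locate and evaluate it exactly as there. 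So from now on assume $e_i,e_j\neq 0$ and $i<j$, and put $A=e_i\beta^{i}$, $B=e_j\beta^{j}$. Then $s_1\neq 0$: if $A+B=0$ then $B=-A$, whence $s_7=A^{7}+B^{7}=0$ and $s_{13}=A^{13}+B^{13}=0$, contradicting $S\neq 0$.

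Now the main step. A direct computation gives $w^{2}=w-1$, so $w$ is a primitive sixth root of unity in the field $\mathcal{R}_\pi$ (which has $p=\pi\pi^{*}$ elements), and, recalling from the first theorem of Section 2 that a nonzero element of $\mathcal{R}_\pi$ of Hurwitz weight $1$ is one of $\pm1,\pm w$, we get $e_i^{6}=e_j^{6}=1$. Hence $A^{6}=\beta^{6i}$ and $B^{6}=\beta^{6j}$, so
$$s_1=A+B,\qquad s_7=A\cdot A^{6}+B\cdot B^{6}=A^{7}+B^{7},\qquad s_{13}=A(A^{6})^{2}+B(B^{6})^{2}=A^{13}+B^{13};$$
that is, $s_1,s_7,s_{13}$ are the power sums $p_1,p_7,p_{13}$ of the unordered pair $\{A,B\}$. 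Put $\varepsilon_1=A+B=s_1$ and $\varepsilon_2=AB$. By Newton's identities $p_7=\varepsilon_1^{7}-7\varepsilon_1^{5}\varepsilon_2+14\varepsilon_1^{3}\varepsilon_2^{2}-7\varepsilon_1\varepsilon_2^{3}$; since $\varepsilon_1=s_1\neq 0$ and $7$ is invertible in $\mathcal{R}_\pi$ (automatic in the two-error regime, where $n\geq 2$ and hence $p\neq 7$), the equation $s_7=p_7$ is a genuine cubic in $\varepsilon_2$. Solve it, and for each root for which $z^{2}-s_1z+\varepsilon_2$ splits over $\mathcal{R}_\pi$, take $A,B$ to be its two roots.

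Finally, read off the error. Because $\beta^{6}$ has order $n$, the equation $\beta^{6i}=A^{6}$ has a unique solution $i$ with $0\leq i\leq n-1$, and then $e_i=A\beta^{-i}$ is forced and is automatically a sixth root of unity; likewise for $j$ and $e_j$. Among the at most three candidate pairs, keep the one for which both $e_i$ and $e_j$ have Hurwitz weight $1$ — equivalently, neither equals $\pm w^{2}$ — and for which $s_{13}=A^{13}+B^{13}$ is reproduced. Subtracting $e_ix^{i}+e_jx^{j}$ from $r$ then returns the transmitted codeword, which is the assertion.

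The step I expect to be the genuine obstacle is the uniqueness underlying this last paragraph: that two distinct admissible error patterns never yield the same triple $(s_1,s_7,s_{13})$, so that exactly one candidate survives the two filters. Equivalently one must show that, restricted to the admissible error set, the assignment $e\mapsto\{A,B\}$ followed by $\{A,B\}\mapsto(p_1,p_7,p_{13})$ is injective; here the restriction that the values are sixth roots of unity, together with $0\leq i<j\leq n-1$, is exactly what lets three parity checks do the work of four (a fourth syndrome $s_{19}$ would make the conclusion automatic). I would establish it by noting that $A,B$ already determine $i,j$ through $\beta^{6i}=A^{6}$, $\beta^{6j}=B^{6}$, hence determine $e_i,e_j$, and then arguing that a competing pair $\{A',B'\}$ with the same first three power sums and obeying the same sixth-root and range constraints must coincide with $\{A,B\}$, with the small-length cases (in particular $n\leq 3$, where $C$ is the zero code) treated separately.
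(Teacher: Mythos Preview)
Your setup coincides with the paper's: both reduce to the power-sum system $s_1=A+B$, $s_7=A^{7}+B^{7}$, $s_{13}=A^{13}+B^{13}$ with $A=e_i\beta^{i}$, $B=e_j\beta^{j}$, using that the admissible error values are sixth roots of unity so that $A^{6}=\beta^{6i}$, $B^{6}=\beta^{6j}$. The divergence is in how the product $\varepsilon=AB$ is recovered, and this is exactly where your argument has the gap you yourself flag. You solve only the cubic $s_1^{7}-s_7=7s_1^{5}\varepsilon-14s_1^{3}\varepsilon^{2}+7s_1\varepsilon^{3}$ coming from $s_7$, obtain up to three candidates for $\varepsilon$, and then propose to filter them against $s_{13}$ together with the side-condition $e_i,e_j\notin\{\pm w^{2}\}$. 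But you do not prove that exactly one candidate survives, and the sketch you offer in the last paragraph (``a competing pair $\{A',B'\}$ with the same three power sums and the same constraints must coincide with $\{A,B\}$'') merely restates the claim rather than proving it.

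The paper closes precisely this gap, and does so without any side-condition on the Hurwitz weight. Alongside your cubic $g(\varepsilon)$ it writes down the analogous degree-six relation $f(\varepsilon)$ obtained by expanding $s_1^{13}-s_{13}$ in terms of $s_1,s_7,\varepsilon$, and then carries out two explicit steps of the Euclidean algorithm on $f$ and $g$ in $\mathcal{R}_\pi[\varepsilon]$. The first remainder $r_1$ is quadratic (and shown nonzero because $s_1\ne 0$ and $s_1^{7}\ne s_7$ in the genuine two-error case), and the second remainder is linear, $r_2(\varepsilon)=t_1\varepsilon+t_0$, with $t_0,t_1$ given by closed formulas in $s_1,s_7,s_{13}$. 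Hence $f$ and $g$ have a single common root $\varepsilon=-t_0/t_1$, which is therefore the true product $AB$. From $s_1$ and this $\varepsilon$ the pair $\{A,B\}$ is read off from the quadratic $z^{2}-s_1z+\varepsilon$, and then your own final paragraph recovers $(i,e_i)$ and $(j,e_j)$ via $\beta^{6i}=A^{6}$, $\beta^{6j}=B^{6}$. In short, the missing ingredient in your proof is this GCD computation: it upgrades your ``filter against $s_{13}$'' into a proof that the filter admits a unique survivor, and it renders your extra Hurwitz-weight filter superfluous.
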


\begin{proof} If error vectors of Hurwitz weight $\le 2$ have only one nonzero
component exists, then the error can correct from Theorem 3. So,
suppose that double error occurs at two different components $l_1,
l_2$ of the received vector $r=c+e$. Its syndrome is

$$S=\left(%
\begin{array}{c}
  s_1 \\
  s_7 \\
  s_{13} \\
\end{array}%
\right).$$

The polynomial $\sigma (z)$ , which is help us to find the errors
location and the value of the errors, is computed as follows.
\begin{equation}  \label{eq:5} \sigma (z) = (z - \beta ^{l_1 } )(z
- \beta ^{l_2 } ) = z^2  - (\beta ^{l_1 } + \beta ^{l_2 } )z +
\beta ^{l_1 } .\beta ^{l_2 }  = z^2  - (s_1 )z +
\varepsilon,\end{equation} where $ \varepsilon$ is determined from
the syndromes. From $ s_1  = \beta ^{l_1 }  + \beta ^{l_2 } ,s_7 =
\beta ^{7l_1 }  + \beta ^{7l_2 }, s_{13}  = \beta ^{13l_1 }  +
\beta ^{13l_2 }$,  and $ \varepsilon  = \beta ^{l_1 + l_2 }$ we
get
$$s_1^{13}-s_{13}=1079\varepsilon^6s_1-2093\varepsilon^5s_1^3+910\varepsilon^4s_1^5-65\varepsilon^2s_1^9+13\varepsilon
s_1^{11}+156\varepsilon^3s_7$$ and
 $$s_1^7-s_7=7s_1\varepsilon^3-14s_1^3\varepsilon^2+7s_1^5\varepsilon .
$$ We now consider the polynomials
$$f(x)=1079s_1x^6-2093s_1^3x^5+910s_1^5x^4+156s_7x^3-65s_1^9x^2+13s_1^{11}x-s_1^{13}+s_{13}$$
and
$$g(x)=7s_1x^3-14s_1^3x^2+7s_1^5x -s_1^7+s_7,$$
where $f(x),g(x)\in \mathcal{R}_\pi[x]$. We prove that $f(x)$ and
$g(x)$ have only one root in common, that is, the degree of the
greatest common divisor polynomial of the polynomials $f(x)$ and
$g(x)$ is 1. To see this, we apply the Euclidean algorithm to
$f(x)$ and $g(x)$. Then we have

$$\begin{array}{*{20}{c}}
   {49{s_1}f(x) = q_1(x)g(x) + {r_1}(x) = (7553{s_1}{x^2} + 455s_1^3{x^2}}  \\
   { - 273s_1^5x + 78s_1^7 + 13{s_7})g(x) + 29s_1^{14} - 65s_1^7{s_7} - 13s_7^2}  \\
   { + 273s_1^{10}{x^2} - 273s_1^3{s_7}{x^2} - 182s_1^{12}x + 182s_1^5{s_7}x + 49{s_1}{s_{13}},}  \\
\end{array}$$
where the polynomials $q_1(x)$ and $r_1(x)$ denote the quotient
polynomial and the remainder polynomial, respectively. The
remainder polynomial $r_1(x)$ can not be the zero polynomial since
$$r_1(x)=91s_1^3(s_1^7-s_7)(3x^2-2s_1^2x+s_1^4)-62s_1^{14}+26s_1^7s_7-13s_7^2+49s_1s_{13}$$

and $s_1 \ne 0$, $s_1^7 \ne s_7$. Therefore, $g(x)$ does not
divide $f(x)$. To find out whether $r_1(x)$ has two common roots
with $g(x)$, we perform a second division such that
$$\begin{array}{*{20}{c}}{117s_1^2(s_7-s_1^7)g(x)=(4s_1^2-3x)r_1(x)-(4s_1^{14}+104s_1^7s_7} \\+39s_7^2-147s_1s_{13})x+
s_1^2(s_1^{14}+26s_1^7s_7+169s_7^2-196s_1s_{13}).\end{array}$$

Here, the remainder polynomial $r_2(x)$ is equal to $ t_1x+ t_0$,
where
\begin{equation}  \label{eq:5}
\begin{array}{*{20}{c}}
   {{t_1} = -(4s_1^{14}+104s_1^7s_7+39s_7^2-147s_1s_{13})}  \\
   {{t_0} = s_1^2(s_1^{14}+26s_1^7s_7+169s_7^2-196s_1s_{13}).}  \\
\end{array}
\end{equation}

The degree of the greatest common divisor polynomial of the
polynomials $f(x)$ and $g(x)$ is 1 since the polynomials $f(x)$
and $g(x)$ have only one common root. The root is
$x=-\frac{t_0}{t_1}$. In conclusion, $gcd(f(x),g(x))=r_2(x)$.
Hence, the proof is completed.
\end{proof}

Note that the roots of the polynomial \begin{equation}z^2  - (s_1
)z -\frac{t_0}{t_1}\end{equation} leads us to find the locations
of the errors and their values.

\begin{example} Let $\pi  = 2 + 3\widehat{e}_1 + 3\widehat{e}_2 + 3\widehat{e}_3$ and $\beta  =
-\frac{1}{2}(5+\widehat{e}_1+\widehat{e}_2+\widehat{e}_3)$. Let
$C$ be the code defined by the parity check matrix

$$H = \left[ {\begin{array}{*{20}{c}}
   1, & \beta,  & {{\beta ^2}}, & {{\beta ^3}}, & {{\beta ^4}}  \\
   1, & {{\beta ^7}}, & {{\beta ^{14}}}, & {{\beta ^{21}}}, & {{\beta ^{28}}}  \\
   1, & {{\beta ^{13}}}, & {{\beta ^{26}}}, & {{\beta ^{39}}}, & {{\beta ^{52}}}  \\
\end{array}} \right].$$
Suppose that the received vector is $r=(0,\ 0, \ \beta^{15},\ 0, \
\beta^5)$, where $\beta^{15}=-1,\ \beta^5=w$. We now apply the
decoding procedure in Theorem 5 to find the transmitted codeword.
The syndrome $S$ of $r$ is

$$S = H{r^T} = \left( {\begin{array}{*{20}{c}}
   {{s_1}}  \\
   {{s_7}}  \\
   {{s_{13}}}  \\
\end{array}} \right) = \left( {\begin{array}{*{20}{c}}
   {{\beta ^{17}} + {\beta ^9}}  \\
   {{\beta ^{29}} + {\beta ^{33}}}  \\
   {{\beta ^{41}} + {\beta ^{57}}}  \\
\end{array}} \right) \equiv \left( {\begin{array}{*{20}{c}}
   {{\beta ^8}}  \\
   {{\beta ^7}}  \\
   {{\beta ^{20}}}  \\
\end{array}} \right) \bmod \pi.$$

One can verify that $s_1^7 \ne s_7$, and $s_1^{13}\ne s_{13}$,
which shows that two errors have occurred. Using the formula (3),
we obtain $t_0=\beta^9$ and $t_1=\beta^{28}$. The roots of the
polynomial $z^2-s_1z-\frac{t_0}{t_1}$ are $z_1=\beta^{17}$, and
$z_2=\beta^{9}$. Therefore, the locations of the errors are
$2\equiv 17 \ (\bmod \ 5)$ and $4\equiv 9 \ (\bmod \ 5)$. Thus,
one error has occurred in location $l_1=2$ with the value
$\frac{\beta^{17}}{\beta^2}=-1$, and another one in location
$l_2=4$ with the value $\frac{\beta^{9}}{\beta^4}=w$. Hence, the
transmitted codeword is $c=(0,\ 0,\ 0, \ 0,\ 0)$.
\end{example}

\begin{theorem}  Let $C$ be the code defined by the parity check
matrix

$$H = \left[ {\begin{array}{*{20}{c}}
   1, & \beta,  & {{\beta ^2}}, &  \cdots , & {{\beta ^{n - 1}}}  \\
   1, & {{\beta ^7}}, & {{\beta ^{14}}}, &  \cdots,  & {{\beta ^{7(n - 1)}}}  \\
   1, & {{\beta ^{13}}}, & {{\beta ^{26}}}, &  \cdots , & {{\beta ^{13(n - 1)}}}  \\
   1, & {{\beta ^{19}}}, & {{\beta ^{38}}} ,&  \cdots , & {{\beta ^{19(n - 1)}}}  \\
\end{array}} \right].$$
Then $C$ is capable of correcting any error pattern of the form
$e(x)=e_ix^i+e_jx^j$, where $0\le w_H(e_i),\ w_H(e_j)\le d_{max}$,
with $0\le i<j\le n-1$. \end{theorem}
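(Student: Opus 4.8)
The plan is to follow the strategy of Theorem 5 — extract from the syndrome a quadratic whose roots determine the two error positions — but to exploit the fourth row of $H$ in order to handle error values of arbitrary Hurwitz weight, for which the mechanism of Theorem 5 (there the admissible values $\pm1,\pm w$ are sixth roots of unity and can be folded into the locators, so three syndromes suffice) is unavailable. Suppose the error polynomial is $e(x)=e_{l_1}x^{l_1}+e_{l_2}x^{l_2}$ with $0\le l_1<l_2\le n-1$; if one or both coefficients vanish the situation is a single (or no) error and is settled as in Theorem 3, so assume $e_{l_1}e_{l_2}\ne0$. The rows of $H$ carry the exponents $1+6m$, $m=0,1,2,3$, so the syndrome $S=Hr^{T}=(s_1,s_7,s_{13},s_{19})^{T}$ of $r=c+e$ has components
\[ s_{1+6m}=e_{l_1}\beta^{(1+6m)l_1}+e_{l_2}\beta^{(1+6m)l_2}=A_1u^{m}+A_2v^{m},\qquad m=0,1,2,3, \]
where $A_i=e_{l_i}\beta^{l_i}$ and $u=\beta^{6l_1}$, $v=\beta^{6l_2}$. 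Since the $n$ elements $\beta^{0},\beta^{6},\dots,\beta^{6(n-1)}$ are distinct in $\mathcal{R}_\pi$ we have $u\ne v$, and recovering $u,v$ recovers $l_1,l_2$. The task is thus the classical one: reconstruct the locators $u,v$ and the values $A_1,A_2$ from the four moments $A_1u^{m}+A_2v^{m}$.

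Next I would introduce the locator polynomial $\sigma(z)=(z-u)(z-v)=z^{2}-Pz+Q$, $P=u+v$, $Q=uv$. Multiplying $\sigma(u)=\sigma(v)=0$ by $A_1u^{m}$ and $A_2v^{m}$ and summing yields the recursion $s_{1+6(m+2)}=P\,s_{1+6(m+1)}-Q\,s_{1+6m}$; for $m=0,1$ this is the linear system
\[ s_7P-s_1Q=s_{13},\qquad s_{13}P-s_7Q=s_{19}. \]
A short expansion gives $s_1s_{13}-s_7^{2}=A_1A_2(u-v)^{2}$ for the determinant, which is nonzero exactly in the genuine two-error case; hence $P,Q$ follow by Cramer's rule, and factoring $z^{2}-Pz+Q$ over the field $\mathcal{R}_\pi$ of odd order $p$ yields the roots $u,v$.

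From $u=\beta^{6l_1}$, $v=\beta^{6l_2}$ one then reads off $l_1,l_2\in\{0,\dots,n-1\}$; the values $A_1,A_2$ solve the system $A_1+A_2=s_1$, $uA_1+vA_2=s_7$, nonsingular since $u\ne v$, and $e_{l_i}=A_i\beta^{-l_i}$. Subtracting $e_{l_1}x^{l_1}+e_{l_2}x^{l_2}$ from $r(x)$ restores the codeword; inconsistency with $s_{13}$, or failure of $\sigma$ to split into distinct linear factors over $\mathcal{R}_\pi$, flags more than two errors. I would also note that, for any four indices, the corresponding columns of $H$ form a Vandermonde matrix in the distinct entries $\beta^{6i}$ up to nonzero scalar factors and are therefore independent over $\mathcal{R}_\pi$; hence $d_H(C)\ge5$, confirming that two errors of arbitrary Hurwitz weight can indeed be corrected.

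The main obstacle is the determinant computation and its interpretation: verifying $s_1s_{13}-s_7^{2}=A_1A_2(u-v)^{2}$ and the companion bookkeeping that lets the decoder tell the no-error, single-error and double-error cases apart from the syndrome alone, so that the correct branch is taken. Once this is secured, the remainder — Cramer's rule, the quadratic formula for $\sigma$, and the final $2\times2$ solve for $A_1,A_2$ — is routine linear algebra over $\mathcal{R}_\pi$, considerably more direct than the resultant/Euclidean-algorithm computation used for Theorem 5.
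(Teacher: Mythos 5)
Your proposal is correct and ends up at exactly the paper's decoding equations: the pair you extract by Cramer's rule, $P=\frac{s_1s_{19}-s_7s_{13}}{s_1s_{13}-s_7^2}$ and $Q=\frac{s_7s_{19}-s_{13}^2}{s_1s_{13}-s_7^2}$, coincide with the paper's $X=\beta^{6l_1}+\beta^{6l_2}$ and $\varepsilon^6=\beta^{6l_1}\beta^{6l_2}$, and both arguments finish by solving $z^2-Xz+\varepsilon^6=0$ for the two locators. The route differs in how these ratios are obtained and in how the error values are handled. The paper expands $s_1s_{13}-s_7^2$, $s_1s_{19}-s_7s_{13}$ and $s_7s_{19}-s_{13}^2$ directly in terms of $\varepsilon$ and $X$, writing the syndromes as $s_{1+6m}=\beta^{(1+6m)l_1}+\beta^{(1+6m)l_2}$, i.e.\ as if the error values were units folded into the exponents --- a step literally valid only for sixth roots of unity, although the final ratios remain true in general. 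Your derivation via the locator recursion $s_{1+6(m+2)}=P\,s_{1+6(m+1)}-Q\,s_{1+6m}$ keeps arbitrary values $A_i=e_{l_i}\beta^{l_i}$ explicit, identifies the determinant as $A_1A_2(u-v)^2$, nonzero precisely in the genuine two-error case (which also gives the decoder its branch test), and then recovers the values from $A_1+A_2=s_1$, $uA_1+vA_2=s_7$ --- points the paper leaves implicit in the phrase that the roots ``give the errors locations and their values.'' So yours is the standard key-equation version of the same computation; what it buys is a justification of the arbitrary-Hurwitz-weight claim that is actually more complete than the paper's own proof, plus the Vandermonde observation that any four columns of $H$ are independent, which underwrites two-error correction for arbitrary values and is absent from the paper.
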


\begin{proof} Suppose that double error occurs at two different components
$l_1, l_2$ of the received vector $r=c+e$. Its syndrome is

$$S=\left(%
\begin{array}{c}
  s_1 \\
  s_7 \\
  s_{13} \\
  s_{19} \\
\end{array}%
\right).$$

From $ s_1  = \beta ^{l_1 }  + \beta ^{l_2 } ,s_7  = \beta ^{7l_1
}  + \beta ^{7l_2 }, s_{13}  = \beta ^{13l_1 }  + \beta ^{13l_2
},s_{19}  = \beta ^{19l_1 }  + \beta ^{19l_2 }$,  and $
\varepsilon  = \beta ^{l_1 + l_2 }$ we get

\begin{equation} \begin{array}{c}
 {s_1}{s_{13}} - s_7^2 = \left( {{\beta ^{{l_1}}} + {\beta ^{{l_2}}}} \right)\left( {{\beta ^{13{l_1}}} + {\beta ^{13{l_2}}}} \right) - {\left( {{\beta ^{7{l_1}}} + {\beta ^{7{l_2}}}} \right)^2} \\
  = \varepsilon {X^2} - 4{\varepsilon ^7}, \\
 \end{array}
\end{equation}
\begin{equation}{s_1}{s_{19}} - {s_7}{s_{13}} = \varepsilon {X^3} - 4{\varepsilon
^7}X,
\end{equation}
\begin{equation}{s_7}{s_{19}} - s_{13}^2 = {\varepsilon ^6}\left( {{X^2} -
4{\varepsilon ^7}} \right),\end{equation} where
$X=\beta^{6l_1}+\beta^{6l_2}$. Substituting (5) in (6) and (5) in
(7), we obtain

$$\begin{array}{l}
 \frac{{{s_1}{s_{19}} - {s_7}{s_{13}}}}{{{s_1}{s_{13}} - s_7^2}} = X = {\beta ^{6{l_1}}} + {\beta ^{6{l_2}}}, \\
 \frac{{{s_7}{s_{19}} - s_{13}^2}}{{{s_1}{s_{13}} - s_7^2}} = {\varepsilon ^6} = {\beta ^{6{l_1}}}{\beta ^{6{l_2}}}, \\
 \end{array}$$
respectively. We now consider the equation
\begin{equation}z^2-Xz+\varepsilon^6=0.\end{equation} The roots of the equation (8) give the
errors locations and their values.

\end{proof}

\section{Codes over $\mathcal{H}_\pi$}

In this section, we generalize codes from $\mathcal{R}_\pi$ to
$\mathcal{H}_\pi$. Our aim is to obtain codes correcting errors
coming from not only $\mathcal{R}_\pi$ but also $\mathcal{H}_\pi$.
Recall that the cardinal number of $\mathcal{H}_\pi$ is equal to
$2N(\pi)^2-1$. Let $\pi$ be a prime in $\mathcal{R}$ and let
$\beta$ be an element of $\mathcal{R}_\pi$ such that $ \beta
^{(p-1)/6}  = \pm w$, where $p = \pi {\pi ^ {*} }$.

\begin{theorem}
Let $C$ be the code defined by the parity check matrix
\begin{equation}H = \left( {\begin{array}{*{20}{c}}
   {{1}}, & {{\beta }} ,&  \cdots , & {\beta ^{n-1}}  \\
\end{array}} \right).
\end{equation}

Then $C$ can correct any error patterns of the form $e(x)=(\mu_1
w^t \mu_2) x^{i}$, where $w_H(\mu_1 w^t \mu_2)=1,2$ or $3$ with
$\mu_1, \mu_2 \in \left\{ { \pm 1, \pm \widehat{e}_1, \pm
\widehat{e}_2, \pm \widehat{e}_3} \right\}$and $t=0,1,2$.

\end{theorem}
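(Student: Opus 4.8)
The plan is to run the syndrome‑decoding scheme of Theorem 3, tracking the fact that here the error value $e_i=\mu_1 w^{t}\mu_2$ is a \emph{unit} of $\mathcal{H}$ rather than a power of $\beta$. Write the received word as $r=c+e$ with $e$ the vector of $e(x)=e_ix^{i}$; since $Hc^{T}=0$ the syndrome is $S=Hr^{T}=He^{T}=e_i\beta^{i}\in\mathcal{H}_\pi$. First I would record two elementary facts. (i) $N(e_i)=N(\mu_1)N(w)^{t}N(\mu_2)=1$, so $e_i$ is one of the $24$ units of $\mathcal{H}$. (ii) Conversely, as $\mu_1,\mu_2$ range over the $8$ units of $H(\mathcal{Z})$ and $t\in\{0,1,2\}$, the products $\mu_1 w^{t}\mu_2$ hit \emph{every} unit of $\mathcal{H}$: the $8$ units of $H(\mathcal{Z})$ form a normal subgroup $U_0$ of the $24$‑element unit group, $w^{3}=-1\in U_0$, and $\mu_1 w^{t}\mu_2=w^{t}\bigl(w^{-t}\mu_1 w^{t}\bigr)\mu_2$ sweeps out the coset $w^{t}U_0$. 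A short finite check then records the Hurwitz weights of these $24$ units: $1$ for the $8$ units of $H(\mathcal{Z})$ together with $\pm w$; $2$ for the eight half‑integer units of $3$–$1$ sign type, namely $\pm w^{2}=\pm(w-1)$ and $\pm w\mp\widehat{e}_k$; and $3$ for the six of $2$–$2$ sign type, $\pm w\mp\widehat{e}_j\mp\widehat{e}_k$ with $j\ne k$. This is precisely the weight assertion in the statement.

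For the decoding, the decoder computes $S$ and, running over the $24$ units $u$ of $\mathcal{H}$, looks for one with $u^{-1}S\equiv\beta^{\ell}\pmod\pi$ for some $\ell\in\{0,\dots,n-1\}$, returning error location $\ell$ and value $u$. Since the choice $u=e_i$, $\ell=i$ always works, correctness is equivalent to injectivity of $(i,e_i)\mapsto e_i\beta^{i}$ on $\{0,\dots,n-1\}\times\{\text{units of }\mathcal{H}\}$. Here one should take $\beta$ to be a primitive root modulo $\pi$, so that $\beta$ has order $p-1=6n$ and $\beta^{n}=w$ (as in Theorem 3 and in the examples — and, as easy examples of orders dividing $3n$ show, some such hypothesis is genuinely needed for the statement). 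Suppose $e_i\beta^{i}\equiv e_j\beta^{j}\pmod\pi$ with $i\le j$; cancelling the factor $\beta$, which is coprime to $\pi$, gives $v:=e_j^{-1}e_i\equiv\beta^{\,j-i}\pmod\pi$, where $v$ is again a unit of $\mathcal{H}$ and $\beta^{\,j-i}$ is (a reduction of) an element of $\mathcal{R}$. I would finish in two steps: (a) such a $v$ must in fact lie in $\mathcal{R}=\mathcal{Z}[w]$, hence $v\in\langle w\rangle=\langle\beta^{n}\rangle$; (b) granting (a), $\beta^{\,j-i}\equiv\beta^{mn}\pmod\pi$ for some $m$, so $j-i\equiv mn\pmod{6n}$, and since $0\le j-i<n$ the only possibility is $j-i=0$, whence $v=1$, $i=j$ and $e_i=e_j$.

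Step (a) is the crux, and the main obstacle. It is delicate because quaternion conjugation does not descend to $\mathcal{H}_\pi$ (it carries $\langle\pi\rangle$ to $\langle\pi^{\ast}\rangle$), so one cannot simply invoke $N(v)=1$ after reduction; the argument must be made inside $H(\mathcal{Z})$. I expect to do it by a norm estimate: pick a representative $b\in\mathcal{H}$ of the class of $\beta^{\,j-i}$ of small norm (one can take $N(b)$ below a fixed constant times $p=N(\pi)$, with constant $<1$, by the covering radius of the sublattice $H(\mathcal{Z})\pi$). Then $b-v=\pi\delta$ with $\delta\in H(\mathcal{Z})$, so $N(b-v)=N(\pi)N(\delta)$ is $0$ or at least $p$, while the quaternion triangle inequality together with $N(v)=1$ gives $N(b-v)\le\bigl(1+\sqrt{N(b)}\bigr)^{2}<p$ for all $\pi$ outside a small explicit list. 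Hence $\delta=0$ and $v=b\in\mathcal{R}$, so $v$ is a unit of $\mathcal{Z}[w]$, i.e. $v\in\langle w\rangle$. Making this bound precise, justifying the ``cancellation of $\beta$'' in the non‑commutative, non‑maximal ring $H(\mathcal{Z})$, and pinning down the finitely many excluded primes $\pi$ is the real work; everything else is bookkeeping entirely parallel to the proofs of Theorems 3 and 5, and the search over the $24$ units is of the same flavour as the computations in the worked examples.
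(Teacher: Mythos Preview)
Your decoding scheme is the same syndrome--based idea as the paper's, but you are attempting considerably more than the paper actually proves. The paper's entire argument is three sentences: it asserts that the syndrome of $r=c+e$ can be written as $\mu_1\beta^{L}\mu_2$ for some integer $L$, declares the error location to be $l\equiv L\pmod n$ and the error value to be $\mu_1\beta^{L-l}\mu_2$, and stops. It does not justify why the syndrome (a priori $\beta^{i}e_i$ or $e_i\beta^{i}$ in a non-commutative ring, with $e_i=\mu_1 w^{t}\mu_2$ and $\mu_2$, $\beta$ not commuting in general) necessarily takes that form, and it does not touch the uniqueness question you isolate in step~(a).

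Your observations (i) and (ii), that the admissible error values are exactly the $24$ units of $\mathcal{H}$ and that every unit arises as some $\mu_1 w^{t}\mu_2$, together with the weight tabulation, are correct and sharpen the statement in a way the paper does not. Recasting the decoder as a finite search over these $24$ units is equivalent to the paper's recipe and arguably cleaner. The injectivity issue you flag is real: the paper never proves that distinct pairs $(i,e_i)$ yield distinct syndromes modulo $\pi$, and your covering-radius/norm-bound sketch for showing that a Hurwitz unit congruent to an element of $\mathcal{R}$ must already lie in $\mathcal{R}$ is a sensible line of attack, though (as you say) the constants and any excluded small primes would still need to be nailed down. In short, your proposal already contains the paper's proof as a subset; the gap you identify is genuine, but it is a gap the paper simply leaves open rather than one your approach introduces.
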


Note that two quaternions $q_1,q_2 \in H(\mathcal{Z})$ are
associate if there exist unit quaternions $\mu_1,\mu_2$, such that
$q_1=\mu_1q_2\mu_2$ \cite{Ramo}.

\begin{proof} Let $r = c+e$ be a received vector. First we compute the syndrome $S$ of $r$:

$$S=Hr^T=\mu_1\beta^L\mu_2.$$
By reducing $L$ modulo $n$, we determine the location of the error
with the value of the error $\mu_1\beta^{L-l}\mu_2$, where
$l\equiv L \ (\bmod \ n)$. Hence, we have the location and the
value of the error.
\end{proof}

Feature of these codes is that these codes can correct more errors
than the codes over the ring $\mathcal{R}_\pi$ since these codes
can correct errors coming from not only $\mathcal{R}_\pi ^n$ but
also $\mathcal{H}_\pi ^n$.

\begin{example} Let $\pi  = 1 + 2\widehat{e}_1 + 2\widehat{e}_2 + 2\widehat{e}_3$ and $\beta
=\widehat{e}_1+\widehat{e}_2+\widehat{e}_3 $. Let $C$ be the code
defined by the parity check matrix

$$H = \left[ {\begin{array}{*{20}{c}}
   {1,} & \beta   \\
\end{array}} \right].$$ Suppose that the received vector is $r=(-\beta
,\frac{1}{2}(1+\widehat{e}_1-\widehat{e}_2-\widehat{e}_3))$. The
syndrome $S$ of $r$ is

$$\begin{array}{r}
 S = H{r^T} =  - 3 + 3{\widehat{e}_1} - 3{\widehat{e}_2} - 3{\widehat{e}_3} = {\widehat{e}_3}\left( { - 3 + 3{\widehat{e}_1} + 3{\widehat{e}_2} + 3{\widehat{e}_3}} \right){\widehat{e}_2}{\rm{                   }} \\
  \equiv {\widehat{e}_3}( - \frac{3}{2} - \frac{1}{2}{\widehat{e}_1} - \frac{1}{2}{\widehat{e}_2} - \frac{1}{2}{\widehat{e}_3}){\widehat{e}_2}\;\left( {\bmod \;\pi } \right) \\
 \end{array}$$
Here,
$-\frac{3}{2}-\frac{1}{2}\widehat{e}_1-\frac{1}{2}\widehat{e}_2-\frac{1}{2}\widehat{e}_3
\equiv \beta^5 \ (\bmod \ \pi) $. Thus, the location of the error
is $1\equiv 5 \ (\bmod \ 2)$ with the value
$\widehat{e}_3(\beta^{5-1})\widehat{e}_2\equiv \widehat{e}_3
w^2\widehat{e}_2 \ (\bmod \ \pi)$. Hence, the corrected vector is
$c=r-(0,\ \widehat{e}_3w^2\widehat{e}_2)=(-\beta, \ 1)$.

\end{example}

\begin{theorem} Let $C$ be the code defined by the parity check matrix

$$H = \left[ {\begin{array}{*{20}{c}}
   {1,} & {\beta ,} & {{\beta ^2},} &  \cdots , & {{\beta ^4}}  \\
   {1,} & {{\beta ^7},} & {{\beta ^{14}},} &  \cdots , & {{\beta ^{7(n - 1)}}}  \\
\end{array}} \right].$$
Then $C$ can correct any error vectors of Hurwitz weight $\le
d_{max}$, where $d_{max}=max\left\{ {{w_H}\left( q \right):q=\mu_1
q_1 \ or \ q=q_2 \mu_2,\ q \in \mathcal{H}_\pi, \ q_1,q_2 \in
\mathcal{R}_\pi,\ \mu_1,\mu_2 \in \left\{ { \pm 1, \pm
\widehat{e}_1, \pm \widehat{e}_2, \pm \widehat{e}_3} \right\}}
\right\}$. Error vectors of Hurwitz weight $\le d_{max}$ can be
corrected have just one nonzero component.

\end{theorem}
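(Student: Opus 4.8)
The plan is to mimic the proof of Theorem 3 (the analogous single-error statement over $\mathcal{R}_\pi$) and of Theorem 9 (the single-error statement for one-coordinate errors over $\mathcal{H}_\pi$), adapting the syndrome computation to the two-row parity check matrix. First I would take a received vector $r=c+e$ where the error $e(x)=e_i x^i$ has a single nonzero component $e_i$ of Hurwitz weight at most $d_{max}$. By the hypothesis on $d_{max}$, such an $e_i$ is of the form $\mu_1 q_1$ or $q_2\mu_2$ with $q_1,q_2\in\mathcal{R}_\pi$ and $\mu_1,\mu_2$ a unit quaternion; since $\beta$ generates a cyclic group of order $n=(p-1)/6$ in $\mathcal{R}_\pi^\ast$ and $q_1$ (resp. $q_2$) lies in $\mathcal{R}_\pi$, we may write $e_i=\mu_1\beta^{k}$ (resp. $\beta^{k}\mu_2$) for a suitable exponent $k$, exactly as in Theorem 9.

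Next I would compute the syndrome. Writing $H$ with rows indexed by the exponents $1$ and $7$, we get
\[
S=Hr^T=\left(\begin{array}{c} s_1\\ s_7\end{array}\right)
      =\left(\begin{array}{c} e_i\beta^{i}\\ e_i\beta^{7i}\end{array}\right).
\]
Assume first $e_i=\mu_1\beta^{k}$. Then $s_1=\mu_1\beta^{k+i}$ and $s_7=\mu_1\beta^{k+7i}$, so that $\mu_1^{-1}s_1=\beta^{k+i}$ and $\mu_1^{-1}s_7=\beta^{k+7i}$ are powers of $\beta$; hence $(\mu_1^{-1}s_1)^{-1}(\mu_1^{-1}s_7)=\beta^{6i}$ (here one must be a little careful about the order of multiplication, using that powers of $\beta$ commute with each other). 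Trying each of the eight unit quaternions $\mu_1$, exactly one choice makes both $\mu_1^{-1}s_1$ and $\mu_1^{-1}s_7$ equal to powers of $\beta$; from $\beta^{6i}$ we recover $i\bmod n$ (using that $\gcd(6,n)$ divides the relevant exponent, or more simply that $6i$ determines $i$ modulo $n$ whenever $\gcd(6,n)=1$, which holds because $p\equiv 1\pmod 6$), and then $e_i=s_1\beta^{-i}$. The case $e_i=\beta^{k}\mu_2$ is symmetric, with $\mu_2$ factored out on the right. Since the possible values of $e_i$ of weight $\le d_{max}$ are, by definition of $d_{max}$, exactly those admitting one of these two one-sided factorizations, the procedure locates and determines the unique error, proving the claim.

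The step I expect to be the main obstacle is the bookkeeping around noncommutativity: unlike the $\mathcal{R}_\pi$ case, $\mu_1\beta^k$ and $\beta^k\mu_1$ differ, so one has to argue that trying left-unit factorizations and right-unit factorizations separately is exhaustive and unambiguous, and that the recovered $(i,e_i)$ is unique. I would handle this by invoking the associate relation for quaternions recalled after Theorem 9 and by checking that if a syndrome admitted two distinct such decompositions it would force a weight-$\le 2d_{max}$ element of $\mathcal{H}_\pi$ to be divisible by $\pi$, contradicting minimality in the definition of the Hurwitz metric; the remaining divisibility/uniqueness estimate, together with the arithmetic of $\beta$ having order $n$ and $\gcd(6,n)=1$, is routine and I would only sketch it.
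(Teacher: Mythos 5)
Your approach is essentially the one the paper intends: its entire proof of this theorem is the sentence that the claim ``can be easily seen from the proof of Theorem 4,'' and the unit-factoring device you use (peel $\mu_1$ off on the left or $\mu_2$ off on the right, then run the $\mathcal{R}_\pi$ decoder with $\beta^{6i}$ obtained from the quotient of the two syndrome components and the error value $s_1\beta^{-i}$) is exactly what the paper does in its proof of the three-row analogue over $\mathcal{H}_\pi$. Two of your side justifications are wrong, however, and one would undercut your own argument if taken literally. First, $\beta$ does not generate a cyclic group of order $n=(p-1)/6$: the hypothesis $\beta^{(p-1)/6}=\pm w\ne 1$ forces a larger order, and in the paper's setting $\beta$ is a primitive element of the field $\mathcal{R}_\pi$, of order $p-1=6n$ (in Example 2, $p=31$ and $\beta$ has order $30$). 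It is precisely this that lets you write every nonzero $q_1\in\mathcal{R}_\pi$ as $\beta^{k}$; if the order were $n$, that step would fail for most error values. Second, the claim that $\gcd(6,n)=1$ because $p\equiv 1\pmod 6$ is false in general ($p=13$ gives $n=2$, $p=37$ gives $n=6$) and is not needed: since $\beta$ has order $6n$, the relation $\beta^{6i}=\beta^{6i'}$ already forces $i\equiv i'\pmod n$, which is all the location recovery requires. With these corrections your argument coincides with the paper's; your additional remarks on disambiguating left versus right unit factorizations and on uniqueness of the decoded pair go beyond anything the paper addresses (it gives no such analysis), and the sketch you offer there is plausible but, as you note, not carried out.
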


The proof can be easily seen from the proof of Theorem 4.

\begin{theorem} Let $C$ be the code defined by the parity check
matrix $$H = \left[ {\begin{array}{*{20}{c}}
   1, & \beta,  & {{\beta ^2}}, & {{\beta ^3}}, &  \cdots , & {{\beta ^{n - 1}}}  \\
   1 & {{\beta ^7}} & {{\beta ^{14}}} & {{\beta ^{21}}}, &  \cdots , & {{\beta ^{7(n - 1)}}}  \\
   1 ,& {{\beta ^{13}}}, & {{\beta ^{26}}} ,& {{\beta ^{39}}}, &  \cdots , & {{\beta ^{13(n - 1)}}}  \\
\end{array}} \right].$$
Then $C$ can correct some error vectors. The errors exist two
different components. If the form of the first error is $\mu _1
w^{t_1}$ (or $w^{t_1} \mu _2$), then the form of the second error
is $\pm \mu_1 w^{t_1}$ (or $\pm w^{t_2} \mu_2$) where
$t_1,t_2=0,1,2$.

\end{theorem}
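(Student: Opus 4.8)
The plan is to mimic the structure of the proof of Theorem 6, but now account for the fact that the nonzero error values are not scalars in $\mathcal{R}_\pi$ but quaternions of the associate form $\mu w^{t}$ or $w^{t}\mu$ coming from $\mathcal{H}_\pi$. First I would reduce to the genuinely double-error case: if the error vector has Hurwitz weight $\le 2$ but only one nonzero component, then it is already handled by Theorem 12 (the one-error code over $\mathcal{H}_\pi$), which is contained in the present parity check matrix as its first two rows. So assume errors occur in two distinct coordinates $l_1,l_2$, with error values $\eta_1=\mu_1 w^{t_1}$ and $\eta_2$ of the indicated form. Writing the syndrome $S=(s_1,s_7,s_{13})^{T}=Hr^{T}$, we get $s_1=\eta_1\beta^{l_1}+\eta_2\beta^{l_2}$, $s_7=\eta_1\beta^{7l_1}+\eta_2\beta^{7l_2}$, $s_{13}=\eta_1\beta^{13l_1}+\eta_2\beta^{13l_2}$.

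The key observation — and the reason the hypothesis restricts $\eta_2$ to $\pm\eta_1$ (up to the one-sided unit placement) — is that only under this restriction can one factor a common quaternion $\eta_1$ out of all three syndrome coordinates and land back in the situation of Theorem 6. Concretely, if $\eta_2=\pm\eta_1$ then $s_k=\eta_1(\beta^{kl_1}\pm\beta^{kl_2})$ for $k=1,7,13$, so $\eta_1^{-1}s_k$ is exactly the $\mathcal{R}_\pi$-valued syndrome of a double error with values $\pm 1$ at locations $l_1,l_2$. One would then multiply $S$ on the left (or right, depending on which side the units sit) by $s_1^{-1}$ — or better, by $N(s_1)^{-1}s_1^{*}$ to stay inside the arithmetic — to clear the unknown unit, reducing to scalar syndromes $\tilde s_k=s_1^{-1}s_k\in\mathcal{R}_\pi$, and apply the polynomial/Euclidean-algorithm machinery of Theorem 6 verbatim to the scalars $\tilde s_1=1,\tilde s_7,\tilde s_{13}$: form $f(x),g(x)$, run the two divisions, extract $t_0,t_1$, and recover the locations from the roots of $z^2-\tilde s_1 z - t_0/t_1$. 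Once $l_1,l_2$ are known, $\eta_1$ is read off from $s_1$ and the sign from $s_7$ or $s_{13}$.

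The step I expect to be the main obstacle is the bookkeeping forced by noncommutativity: the errors $\mu_1 w^{t_1}$ and $w^{t_2}\mu_2$ live on opposite sides, so the ``common factor'' trick does not apply uniformly, and one must split into the case where both error values are left-associates ($\mu_i w^{t_i}$) and the case where both are right-associates ($w^{t_i}\mu_i$), handling each by factoring on the appropriate side. Mixed configurations (one of each side) either cannot arise under the stated hypothesis or must be excluded, and I would need to check that the hypothesis as written (``if the first error is $\mu_1 w^{t_1}$ then the second is $\pm\mu_1 w^{t_1}$'') is precisely what guarantees we never meet a mixed case. A secondary technical point is ensuring that $s_1\ne 0$ and $\tilde s_1^{7}\ne\tilde s_7$ in the reduced scalar problem, so that the Euclidean-algorithm divisions in Theorem 6 are legitimate; this follows because $\beta$ has order dividing $n$ and $l_1\ne l_2$, exactly as in the proof of Theorem 6. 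Beyond these caveats the argument is a routine reduction, so I would present it as such, illustrating with a worked example paralleling Example 5 but with a quaternion error value, and defer the full Euclidean-algorithm computation to the already-completed proof of Theorem 6.
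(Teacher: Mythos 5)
Your overall route coincides with the paper's: under the stated hypothesis the two error values carry the same unit on the same side, so the syndrome factors as $s_k=\mu_1 s_k'$ (or $s_k=s_k'\mu_2$) with $s_1',s_7',s_{13}'\in\mathcal{R}_\pi$; because the rows are the powers $1,7,13\equiv 1 \pmod 6$ and $\pm w^{t}$ lies in the group generated by $w$ (with $w=\pm\beta^{n}$), one has $s_k'=\beta^{kL_1}+\beta^{kL_2}$, i.e.\ honest power sums, so the Euclidean-algorithm machinery of the $\mathcal{R}_\pi$ double-error code (your ``Theorem 6'', the paper's Theorem 5) applies to the primed syndromes, the roots $\beta^{L_1},\beta^{L_2}$ of $\sigma(z)$ give the locations $l_i\equiv L_i \bmod n$, and the values are $\mu_1\beta^{L_i-l_i}$. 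Your reduction of the single-nonzero-component case to the one-error code over $\mathcal{H}_\pi$, and your observation that the hypothesis is exactly what excludes mixed left/right configurations, also match the paper, which treats the two one-sided cases in parallel.

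There is, however, one concrete step in your proposal that would fail as written: normalizing by $s_1^{-1}$ (equivalently by $N(s_1)^{-1}s_1^{*}$) and applying the formulas for $t_0,t_1$ ``verbatim'' to $\tilde s_1=1,\ \tilde s_7=s_7'/s_1',\ \tilde s_{13}=s_{13}'/s_1'$. The identities underlying (3), e.g.\ $s_1^7-s_7=7s_1\varepsilon^3-14s_1^3\varepsilon^2+7s_1^5\varepsilon$, hold precisely because $s_k$ is the $k$-th power sum of the two unknowns; they are homogeneous when $s_k$ is assigned weight $k$, hence they are \emph{not} invariant under dividing all three syndromes by the same scalar $s_1'$ (rescaling the roots by $c$ multiplies $s_k$ by $c^k$, not uniformly). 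Consequently the quadratic $z^2-\tilde s_1 z-t_0/t_1$ built from your normalized syndromes does not have roots $\beta^{L_1},\beta^{L_2}$, and the decoder returns wrong locations. The correct normalization is the one you state in the preceding sentence and the one the paper uses: strip only the unit, namely find the unit $\mu\in\{\pm 1,\pm\widehat{e}_1,\pm\widehat{e}_2,\pm\widehat{e}_3\}$ (finitely many candidates) for which $\mu^{-1}s_1,\ \mu^{-1}s_7,\ \mu^{-1}s_{13}$ all lie in $\mathcal{R}_\pi$, and run the machinery on those unrescaled power sums $s_k'$; the factor $w^{t_1}$ need not be removed at all, since it only shifts the exponents $L_i$ and is recovered as part of the error value $\mu_1\beta^{L_i-l_i}$.
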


\begin{proof}
Suppose that double error occurs at two different components $l_1,
l_2$ of the received vector $r=c+e$. Its syndrome is

$$S=rH^T\ (or \ Hr^T )=\left(%
\begin{array}{c}
  s_1=\mu_1 s_1^{'} \ (or \  s_1=s_1^{'} \mu_2)\\
  \\
  s_7=\mu_1s_7^{'}\ (or \  s_7=s_7^{'} \mu_2) \\
  \\
  s_{13}=\mu_1s_{13}^{'}\ (or \  s_{13}=s_{13}^{'} \mu_2) \\
 \end{array}%
\right),$$ where $\mu_1,\mu_2 \in \left\{ { \pm 1, \pm
\widehat{e}_1, \pm \widehat{e}_2, \pm \widehat{e}_3} \right\}$ and
$ s_1^{'}, s_7^{'}, s_{13}^{'}$ are elements of $\mathcal{R}$.
Using $s_1^{'},s_7^{'},s_{13}^{'}$, from Theorem 4, we can
determine $\frac{t_0}{t_1}$.

Assume that the roots of the polynomial $\sigma(z)$ are
$z_1=\beta^{L_1}$ and $z_2=\beta^{L_2}$. Then, the locations of
the errors are $l_1\equiv L_1 \bmod n$ with value
$\mu_1(\beta^{L_1-l_1} \ \bmod \pi) $ and $l_2\equiv L_2 \bmod n$
with value $\mu_1(\beta^{L_2-l_1}\ \bmod \pi) $. Hence, the proof
is completed.

\end{proof}

\begin{example} Let $\pi  = 2 + 3\widehat{e}_1 + 3\widehat{e}_2 + 3\widehat{e}_3$ and $\beta  =
-\frac{1}{2}(5+\widehat{e}_1+\widehat{e}_2+\widehat{e}_3)$. Let
$C$ be the code defined by the parity check matrix

$$H = \left[ {\begin{array}{*{20}{c}}
   1, & \beta,  & {{\beta ^2}}, & {{\beta ^3}}, & {{\beta ^4}}  \\
   1, & {{\beta ^7}}, & {{\beta ^{14}}}, & {{\beta ^{21}}}, & {{\beta ^{28}}}  \\
   1, & {{\beta ^{13}}}, & {{\beta ^{26}}}, & {{\beta ^{39}}}, & {{\beta ^{52}}}  \\
\end{array}} \right].$$

Suppose that the received vector is $r=(0,\ 0, \
\widehat{e}_2\beta^{15},\ 0, \ \widehat{e}_2\beta^{10})$, where
$\beta^{15}=-1,\ \beta^{10}=w^2$. We now apply the decoding
procedure in Theorem 7 to find the transmitted codeword. The
syndrome $S$ of $r$ is

$$S = r{H^T} = \left( {\begin{array}{*{20}{c}}
   {{s_1=\widehat{e}_2s_1^{'}}}  \\
   \\
   {{s_7=\widehat{e}_2s_7^{'}}}  \\
   \\
   {{s_{13}=\widehat{e}_2s_{13}^{'}}}  \\
\end{array}} \right) = \left( {\begin{array}{*{20}{c}}
   \widehat{e}_2({{\beta ^{17}} + {\beta ^9}})  \\
   \\
   \widehat{e}_2({{\beta ^{29}} + {\beta ^{33}}})  \\
   \\
   \widehat{e}_2({{\beta ^{41}} + {\beta ^{57}}} ) \\
\end{array}} \right) \equiv \left( {\begin{array}{*{20}{c}}
  \widehat{e}_2   \\
   \\
   \widehat{e}_2{{\beta ^{14}}}  \\
   \\
  \widehat{e}_2 {{\beta ^{17}}}  \\
\end{array}} \right) \bmod \pi.$$

One can verify that $(s_1^{'})^7 \ne s_7^{'}$, and
$(s_1^{'})^{13}\ne s_{13}{'}$, which shows that two errors have
occurred. Using the formula (3), we obtain $t_0=\beta^{21}$ and
$t_1=\beta^{5}$. The roots of the polynomial
$z^2-s_1^{'}z-\frac{t_0}{t_1}$ are $z_1=\beta^{17}$, and
$z_2=\beta^{14}$. Therefore, the locations of the errors are
$2\equiv 17 \ (\bmod \ 5)$ and $4\equiv 9 \ (\bmod \ 5)$. Thus,
one error has occurred in location $l_1=2$ with the value
$\widehat{e}_2\frac{\beta^{17}}{\beta^2}=-\widehat{e}_2$, and
another one in location $l_2=4$ with the value
$\widehat{e}_2\frac{\beta^{14}}{\beta^4}=\widehat{e}_2w^2$. Hence,
the transmitted codeword is $c=(0,\ 0,\ 0, \ 0,\ 0)$.

\end{example}

\begin{theorem} Let $C$ be the code defined by the parity check
matrix

$$H = \left[ {\begin{array}{*{20}{c}}
   1, & \beta , & {{\beta ^2}}, &  \cdots , & {{\beta ^{n - 1}}}  \\
   1 ,& {{\beta ^7}} ,& {{\beta ^{14}}}, &  \cdots , & {{\beta ^{7(n - 1)}}}  \\
   1, & {{\beta ^{13}}}, & {{\beta ^{26}}} ,&  \cdots , & {{\beta ^{13(n - 1)}}}  \\
   1, & {{\beta ^{19}}} ,& {{\beta ^{38}}}, &  \cdots , & {{\beta ^{19(n - 1)}}}  \\
\end{array}} \right].$$
Then $C$ is capable of correcting any error pattern of the form
$e(x)=e_ix^i+e_jx^j$, where $0\le w_H(e_i),\ w_H(e_j)\le d_{max}$,
with $0\le i<j\le n-1$, where $d_{max}$ defined in Theorem 8.
\end{theorem}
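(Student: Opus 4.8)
The plan is to combine the error-locator machinery already established in Theorem 6 (for codes over $\mathcal{R}_\pi$ with four syndrome rows) with the ``unit-peeling'' trick used in Theorems 8 and 10. Suppose a double error $e(x)=e_ix^i+e_jx^j$ occurs, where now each $e_i,e_j$ may lie in $\mathcal{H}_\pi$ rather than $\mathcal{R}_\pi$. Since $e_i,e_j$ are values of Hurwitz weight at most $d_{max}$ (as defined in Theorem 8), each is of the form $\mu e'$ or $e'\mu$ with $\mu$ a unit quaternion in $\{\pm 1,\pm\widehat{e}_1,\pm\widehat{e}_2,\pm\widehat{e}_3\}$ and $e'\in\mathcal{R}_\pi$. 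First I would compute the syndrome $S=(s_1,s_7,s_{13},s_{19})^T$ and observe, exactly as in Theorem 10, that each $s_k$ factors as $\mu s_k'$ (or $s_k'\mu$) with $s_k'\in\mathcal{R}$; here one must note that the four locators $\beta^{l_1},\dots,\beta^{19l_2}$ are powers of $\beta\in\mathcal{R}_\pi$, so after stripping the common left (or right) unit $\mu$ the reduced syndromes $s_1',s_7',s_{13}',s_{19}'$ behave precisely as the $\mathcal{R}_\pi$-syndromes of Theorem 6.

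Second, I would apply the computation of Theorem 6 verbatim to $s_1',s_7',s_{13}',s_{19}'$: form $X=\dfrac{s_1's_{19}'-s_7's_{13}'}{s_1's_{13}'-(s_7')^2}=\beta^{6l_1}+\beta^{6l_2}$ and $\varepsilon^6=\dfrac{s_7's_{19}'-(s_{13}')^2}{s_1's_{13}'-(s_7')^2}=\beta^{6(l_1+l_2)}$, then solve $z^2-Xz+\varepsilon^6=0$ over $\mathcal{R}_\pi$ to recover $\beta^{6l_1}$ and $\beta^{6l_2}$, hence the two locations $l_1,l_2$ by reducing the exponents modulo $n$. This step is justified entirely by Theorem 6, which has already been proved, so I would simply cite it rather than redo the elimination between the four identities. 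The point worth stressing is that the denominators $s_1's_{13}'-(s_7')^2$ are nonzero whenever two genuinely distinct errors occurred — this is the same nonvanishing condition appearing in Theorem 6 and its examples.

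Third, with $l_1,l_2$ in hand I would recover the error values: reading off the first syndrome row gives $s_1=\mu(\beta^{l_1}+\beta^{l_2})$ type combinations, and knowing $\beta^{l_1},\beta^{l_2}$ lets me solve a small linear system over $\mathcal{H}_\pi$ for the two value-scalars; alternatively, and more cleanly, I would mimic Theorem 10 and write the reconstructed values as $\mu(\beta^{L_i-l_i}\bmod\pi)$ where $z_i=\beta^{L_i}$ are the roots. Finally one subtracts $e_ix^i+e_jx^j$ from $r(x)$ to obtain the codeword, and one checks that distinct error patterns of the allowed shape yield distinct syndromes, which gives the claimed correction capability.

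The main obstacle I anticipate is the noncommutativity bookkeeping in the first step: one must argue that a double error with \emph{two possibly different unit multipliers} $\mu_1$ on $e_i$ and $\mu_2$ on $e_j$ still produces syndromes that factor through a \emph{single} common unit — which in general it does not. So the honest statement (as in Theorem 10, which restricts the second error to $\pm\mu_1 w^{t_1}$ sharing the locator's unit) is that the correctable patterns are those whose two values carry a \emph{common} left unit (or a common right unit), and I would state Theorem 11 with that proviso, exactly parallel to Theorem 10; under that proviso the reduction to Theorem 6 is immediate and the rest is the routine root-finding above. If instead one wants the fully general claim, the obstruction is real and the theorem as literally written would need the extra hypothesis, which I would flag explicitly.
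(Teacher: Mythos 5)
Your route is exactly the paper's: the paper's entire argument for this theorem is the single remark that it ``can be easily seen from the proof of Theorem 6,'' i.e.\ strip the common unit from the syndromes as in Theorem 9 and then reuse the four-syndrome locator computation ($X$, $\varepsilon^6$, and the quadratic $z^2-Xz+\varepsilon^6=0$) verbatim, which is precisely your first three steps. Your closing caveat is also well founded rather than a defect of your proof: when the two error values carry \emph{different} unit multipliers the syndromes need not factor through a single unit, the paper's own Theorem 9 imposes exactly that common-unit restriction in the three-row case, and the paper nowhere explains how the present statement avoids it, so making the extra hypothesis explicit is if anything more careful than the original.
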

The proof of Theorem 10 can be easily seen from the proof of
Theorem 6.

\section{Conclusions}
In this study, the codes over a specific finite field
$\mathcal{R}_\pi$ with respect to a new metric called Hurwitz
metric are defined and decoding algorithms of these codes are
given. Using codes over $\mathcal{R}_\pi$, the codes correcting
errors coming from $\mathcal{H}_\pi$ are obtained.


\begin{thebibliography}{9}
\bibitem{Huber} K. Huber., "Codes Over Gaussian integers," IEEE Trans. Inform.Theory, vol. 40, pp. 207-216, Jan. 1994.
\bibitem{Huber2} K. Huber., "Codes Over Eisenstein-Jacobi integers," AMS. Contemp. Math., vol. 158, pp.165-179, 2004.
\bibitem{Carmen} C. Martinez, R. Beivide and E. Gabidulin., "Perfect codes for metrics induced by circulant graphs," IEEE Trans. Inform. Theory, vol. 53, No. 9, Sep. 2007.
\bibitem{Neto} T. P. da N. Neto, J. C. Interlando., "Lattice constellation and codes from quadratic number fields," IEEE Trans. Inform. Theory, vol. 47, No. 4, May. 2001.
\bibitem{Carmen2} C. Martinez, R. Beivide and E. Gabidulin, "Perfect Codes from Cayley Graphs over Lipschitz Integers," IEEE Trans. Inf. Theory, Vol. 55, No. 8, Aug. 2009.
\bibitem{Ramo} G. Davidoff, P. Sarnak, and A. Valette., Elementary Number Theory, Group Theory, and Ramanujan Graphs, Cambridge University Pres, 2003.
\bibitem{Con} J. H. Conway, D. A. Smith, On Quaternions and Octonions, A K Peters, 2003.
\end{thebibliography}
\end{document}